\newtheorem{theorem}{Theorem}[section]
\newtheorem{lemma}[theorem]{Lemma}
\newtheorem{proposition}[theorem]{Proposition}
\newtheorem{corollary}[theorem]{Corollary}
\newtheorem{example}[theorem]{Example}
\newcommand{\bull}{\mbox{$\;\;\;$\vrule height .9ex width .8ex depth -.1ex}}
\newenvironment{proof}{\par\smallbreak\noindent{\bf Proof.~}}
{\unskip\nobreak\hfill \bull \par\medbreak}
\newenvironment{proofof}[1]{\par\smallbreak\noindent{\bf Proof of~#1.~}}
{\unskip\nobreak\hfill \bull \par\medbreak}
\newcommand{\Case}[2]{\smallskip\par{\it Case #1:\/ #2}}
\newcounter{claim}
\renewcommand{\theclaim}{\Alph{claim}}
\newenvironment{claim}{\refstepcounter{claim}%
\par\medskip\par\noindent{\it Claim~\theclaim.~}~\rm}%
{\par\smallskip\par}
\newenvironment{bfenumerate}%
{%
\mbox{}\begin{enumerate}}{\end{enumerate}}
\newcommand{\of}[1]{\left( #1 \right)}
\newcommand{\Set}[1]{\left\{ #1 \right\}}
\newcommand{\refeq}[1]{(\ref{eq:#1})}
\newcommand{\setdef}[2]{\left\{ \hspace{0.5mm} #1 : \hspace{0.5mm} #2 \right\}}
\newcommand{\FO}[1]{\ensuremath{\mathrm{FO}}}
\newcommand{\tw}[1]{\mathit{tw}(#1)}
\newcommand{\wplanar}[1]{W(#1;\mathrm{planar})}
\newcommand{\csp}[1]{\ensuremath{\mathit{CSP}(#1)}}
\newcommand{\game}[3]{\ensuremath{\mathrm{COL}^{#1}_{#2}(#3)}}
\newcommand{\Tr}{T_{\mathrm r}}
\newcommand{\Tb}{T_{\mathrm b}}
\newcommand{\Tg}{T_{\mathrm g}}
\colorlet{acolor}{red}
\colorlet{bcolor}{blue}
\colorlet{ccolor}{ForestGreen}
\tikzset{
vertex/.style={circle,draw,inner sep=2pt,fill=none},
avertex/.style={circle,draw,inner sep=2pt,fill=acolor},
bvertex/.style={circle,draw,inner sep=2pt,fill=bcolor},
cvertex/.style={circle,draw,inner sep=2pt,fill=ccolor},
uvertex/.style={circle,draw=Black,inner sep=2pt,fill=white}
}
\title{On the dynamic width of the 3-colorability problem}
\author{Albert Atserias\\[2mm]
\small Universitat Politècnica de Catalunya
\and
Anuj Dawar\thanks{%
Supported by EPSRC grant EP/H026835.  Research visit to Germany supported by DAAD grant A/13/05456.}
\\[2mm]
\small University of Cambridge
\and
Oleg Verbitsky\thanks{%
Supported by DFG grant VE 652/1--1.
On leave from the Institute for Applied Problems of Mechanics and Mathematics,
Lviv, Ukraine.}\\[2mm]
\small Humboldt Universität zu Berlin}
\date{}
\begin{document}

\maketitle

\begin{abstract}
A graph $G$ is 3-colorable if and only if
it maps homomorphically to the complete 3-vertex graph $K_3$.
The last condition can be checked by a $k$-consistency
algorithm where the parameter $k$ has to be chosen large enough, dependent on $G$.
Let $W(G)$ denote the minimum $k$ sufficient for this purpose.
For a non-3-colorable graph $G$, $W(G)$ is equal to the minimum $k$ such that $G$
can be distinguished from $K_3$ in the $k$-variable
existential-positive first-order logic.
We define the \emph{dynamic width} of the 3-colorability 
problem as the function $W(n)=\max_G W(G)$,
where the maximum is taken over all non-3-colorable $G$
with $n$ vertices. 

The assumption $\mathrm{NP}\ne\mathrm{P}$ implies that $W(n)$ is unbounded.
Indeed, a lower bound $W(n)=\Omega(\log\log n/\log\log\log n)$
follows unconditionally from the work of Ne\v{s}et\v{r}il and Zhu \cite{NesetrilZ96}
on bounded treewidth duality.
The Exponential Time Hypothesis implies a much stronger bound $W(n)=\Omega(n/\log n)$ and
indeed we unconditionally prove that $W(n)=\Omega(n)$. In fact, an even stronger
statement is true: A first-order sentence distinguishing any 3-colorable graph on $n$ vertices
from any non-3-colorable graph on $n$ vertices must have $\Omega(n)$ variables.

On the other hand, we observe that $W(G)\le 3\,\alpha(G)+1$ and $W(G)\le n-\alpha(G)+1$
for every non-3-colorable graph $G$ with $n$ vertices, where
$\alpha(G)$ denotes the independence number of $G$.
This implies that $W(n)\le\frac34\,n+1$, 
improving on the trivial upper bound $W(n)\le n$.

We also show that $W(G)>\frac1{16}\, g(G)$ for every non-3-colorable graph $G$,
where $g(G)$ denotes the girth of~$G$.

Finally, we consider the function $W(n)$ over planar graphs and prove that
$W(n)=\Theta(\sqrt n)$ in the case.
\end{abstract}

\section{Introduction}

\paragraph{Consistency checking algorithms and the width of a CSP.}

If there is a homomorphism from a structure $A$
to a structure $B$, we will say that $A$  \emph{maps homomorphically to} $B$
and write $A\to B$. According to the framework of Feder and Vardi \cite{FederV98},
a (non-uniform) \emph{constraint satisfaction problem} is modelled  as the decision problem for 
the class of structures $\csp B=\setdef{A}{A\to B}$
determined by an appropriate template structure $B$.
Several archetypal NP-complete problems fit in with this setting.
For example, a graph $G$ is 3-colorable if and only if
$G\to K_3$, where $K_3$ denotes the complete graph on 3 vertices.
Thus, the 3-COLORABILITY problem is identical to $\csp{K_3}$.

$k$-Consistency algorithms are based on the concept of constraint propagation
and are used as a practical approach to the CSP since the seventies; see, e.g., 
the survey \cite{Bessiere06} and the recent complexity analysis in \cite{Berkholz13,BerkholzV13}.
The cases of $k=2,3$ are known as \emph{arc} and \emph{path} consistency,
respectively, and themselves have a large body of literature.
Instead of deciding if $A\to B$, a $k$-consistency algorithm checks 
if the pair $(A,B)$ fulfills a weaker combinatorial condition,
which we will call the \emph{$k$-consistency property}.\footnote{%
Saying that ``a pair $(A,B)$ has the $k$-consistency property'', we simplify
more customary expressions
like ``the strong $k$-consistency can be enforced or established on~$(A,B)$''.
We do not define this notion here; see \cite[Section 3]{AtseriasBD07} or equivalent combinatorial statements
in Sections \ref{s:duality} and~\ref{s:logic}.} 
This approach is incomplete in general. 
Always when $A\to B$, the pair $(A,B)$ has the $k$-consistency property for any $k$. 
However, if $A\not\to B$, the $k$-consistency property can still be satisfied if $k$ is too small.  
If $A\not\to B$, let $W_B(A)$ denote the minimum $k$ such that
$(A,B)$ has the $k$-consistency property.
We call this parameter \emph{$B$-width of~$A$}.
If $W_B(A)$ is bounded by a constant independent of $A$, the problem $\csp B$
(and the structure $B$) is said to have \emph{bounded width}.

The $k$-consistency property can be checked in time $(|A|+|B|)^{O(k)}$, where
$|A|$ and $|B|$ denote the number of elements in the structures.
It follows that all bounded width problems are solvable in polynomial time.
An algebraic characterization of bounded-width structures is given 
by Barto and Kozik~\cite{BartoK09}.

\paragraph{Graphs.}

If $H$ is a bipartite graph, $\csp H$ has width at most 3 (see, e.g., \cite{NesetrilZ96})
and, hence, is solvable in polynomial time. The dichotomy theorem of Hell
and Ne\v{s}et\v{r}il \cite{HellN90} says that $\csp H$ is NP-complete
whenever $H$ is non-bipartite. It follows that, for each non-bipartite $H$,
the values of $W_H(G)$ cannot be bounded by a constant unless $\mathrm{NP}=\mathrm{P}$.
This is proved by Ne\v{s}et\v{r}il and Zhu \cite{NesetrilZ96} unconditionally.
A straightforward consequence of their analysis is the existence of a graph $G$
on $n$ vertices, for any large enough $n$, such that $G\not\to H$ and
$W_H(G)=\Omega(\log\log n/\log\log\log n)$. We survey this approach in Section~\ref{s:duality}.

\paragraph{Dynamic width of $\csp H$.}

The main purpose of this paper is to pursue the 
analysis of the width parameter $W_H(G)$, focusing on its dynamic behavior
as a function of the number of vertices in $G$.
We define the \emph{dynamic width} of a graph $H$ (or of the corresponding
$\csp H$) as the function
$$
W_H(n)=\setdef{W_H(G)}{G\not\to H,\,|G|=n}.
$$
Note that $W_H(n)$ is well defined for all sufficiently large $n$,
at least for all $n$ greater than the chromatic number of $H$.
As it was already mentioned, $W_H(n)\le3$ for any bipartite $H$.
If $H$ is not bipartite, then from the work of Ne\v{s}et\v{r}il and Zhu \cite{NesetrilZ96} 
it follows that 
\begin{equation}
  \label{eq:NesZhu}
  W_H(n)=\Omega\of{\frac{\log\log n}{\log\log\log n}}.
\end{equation}

To emphasize the importance of this notion, suppose that we know that
$W_H(n)=O(k(n))$ where $k(n)$ is a function computable in time
bounded by a polynomial\footnote{%
If $W_H(n)$ is unbounded, this can be relaxed to the computability
in time $n^{O(k(n)}$.}
in $n$.
In this case $\csp H$ is solvable in time $2^{O(k(n)\log n)}$,
which can prove to be a nontrivial algorithmic result even when this time bound
is superpolynomial.

\paragraph{The 3-colorability problem.}

In this paper, we focus our attention on the dynamic width of $\csp{K_3}$, that is,
the 3-COLORABILITY problem. To facilitate the notation, let $W(G)=W_{K_3}(G)$
and $W(n)=W_{K_3}(n)$. Dawar \cite{Dawar98} proves that 3-COLORABILITY
is not definable in the infinitary logic with finitely many variables.
The argument of \cite{Dawar98} immediately translates to the bound $W(n)=\Omega(\sqrt n)$.

The currently best algorithm for 3-COLORABILITY \cite{BeigelE05} runs in time $O(1.3289^n)$.
 It is known \cite{ImpagliazzoPZ01} that 3-COLORABILITY is not solvable in time
$2^{o(n)}$ unless the \emph{Exponential Time Hypothesis} fails.
Therefore, under this hypothesis we should have
a lower bound at least as strong as $W(n)=\Omega(n/\log n)$.
Our main results is an unconditional linear lower bound, i.e.,
\begin{equation}
  \label{eq:f-lower}
W(n)=\Omega(n).  
\end{equation}
The proof of \refeq{f-lower} is based on the logical characterization
of the parameter $W(G)$ (see Section \ref{s:logic}) and exploits the same method as used in~\cite{Dawar98}.

A straightforward observation $W_H(G)\le|G|$ implies that $W(n)\le n$.
By \refeq{f-lower}, this trivial upper bound can be improved at most up to a
constant factor. We show that such an improvement is really possible,
noticing that
\begin{equation}
  \label{eq:f-upper}
W(n)\le\frac34n+1.  
\end{equation}
This bound follows from the relation
$$W(G)\le\min\{3\,\alpha(G),n-\alpha(G)\}+1,$$ 
where $\alpha(G)$ denotes the independence number of~$G$.

We also relate the parameter $W(G)$ to the girth $g(G)$ of the graph, proving that 
$$W(G)>\frac1{16}\, g(G)$$
(cf.\ the bound \refeq{loglogg} below).
This relation implies only logarithmic lower bound for $W(n)$
but has an advantage of being true for every non-3-colorable graph~$G$.

Finally, we consider the function $W(n)$ over planar graphs and prove that
$W(n)=\Theta(\sqrt n)$ in the case.

\paragraph{Related work: The dynamic width of other CSPs.}

As it is well known, the 3-SATISFIABILITY problem can be encoded
as a CSP. Atserias \cite[Theorem 2]{Atserias04} obtains a result that,
using our notation, can be stated as
\begin{equation}
  \label{eq:3sat}
  W_{\mathrm{3SAT}}(n)=\Omega\of{\frac n{\log^2n}},
\end{equation}
where $n$ is the number of propositional variables in an input 3CNF.
Moreover, \refeq{3sat} is shown for 3SAT instances with $O(n)$ clauses
and is an average-case rather than
a worst-case bound. Note that \refeq{3sat} is close to
the bound $W_{\mathrm{3SAT}}(n)=\Omega\of{\frac n{\log n}}$
that follows from the Exponential Time Hypothesis~\cite{ImpagliazzoPZ01}.
Atserias, Bulatov and Dawar \cite{AtseriasBD09} prove that testing the solvability
of systems of equations over a finite Abelian group and related CSPs
are not definable in the infinitary logic with finitely many variables
(even with counting), which implies that the dynamic width of these problems
is unbounded.

\section{Treewidth duality}\label{s:duality}

\paragraph{Graph-theoretic preliminaries.}

An {\em $s$-coloring\/} of a graph $G$ is a map from the set of vertices $V(G)$ to 
the set of colors $\{1,2,\ldots,s\}$.
A coloring $c$ is {\em proper\/} if $c(u)\ne c(v)$ for any adjacent vertices
$u$ and $v$. A graph $G$ is {\em $s$-colorable\/} if it has a proper
$s$-coloring. The minimum $s$ for which $G$ is $s$-colorable is called
the {\em chromatic number\/} of $G$ and denoted by $\chi(G)$. If
$\chi(G)=s$, then $G$ is called {\em $s$-chromatic}.
A set of vertices is \emph{independent} if all of them are pairwise non-adjacent.
The \emph{independence number} $\alpha(G)$ of a graph $G$ is the maximum size
of an independent set in $G$.
In a proper coloring of $G$, any set of vertices with the same color is independent.
This implies that $\chi(G)\alpha(G)\ge n$, where $n$ denotes the number
of vertices in~$G$.

The \emph{girth} $g(G)$ of a graph $G$ is the minimum length
of a cycle in~$G$.

\begin{proposition}{\bf (Erd\H{o}s \cite{Erdos59})\ }\label{prop:erdos}
For every $n\ge s$ there is an $s$-chromatic graph $G$ on $n$ vertices with 
$$
g(G)=\Omega\of{\frac{\log n}{\log s}}.  
$$
\end{proposition}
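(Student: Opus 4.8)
The plan is to prove this by the probabilistic method, in the classical deletion form due to Erd\H{o}s. Fix a target girth parameter $\ell$, to be calibrated against $n$ and $s$ at the very end, and take $m=2n$ together with a random graph $\mathbf{G}$ on the vertex set $\{1,\dots,m\}$ in which every one of the $\binom{m}{2}$ possible edges is present independently with probability $p=m^{\theta-1}$, where $\theta\in(0,1)$ is the crucial free parameter. Two quantities must be controlled: the number $X$ of cycles of length at most $\ell$, and the independence number $\alpha(\mathbf{G})$. They pull $p$ in opposite directions, since a larger $p$ drives $\alpha$ down (and hence forces the chromatic number up, via $\chi\ge |V|/\alpha$), whereas a smaller $p$ keeps the count of short cycles below the threshold that the deletion step can afford. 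This tension is exactly what confines the attainable girth to a logarithmic quantity.

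First I would estimate the first moment of the number of short cycles by $\mathbb{E}[X]\le\sum_{i=3}^{\ell}\frac{(mp)^i}{2i}$, whose dominant term is $m^{\theta\ell}/(2\ell)$; choosing $\theta<1/\ell$ makes $\mathbb{E}[X]=o(m)$, so Markov's inequality yields $\Pr[X\ge m/2]<\tfrac12$. Independently, the first-moment estimate $\Pr[\alpha(\mathbf{G})\ge a]\le\binom{m}{a}(1-p)^{\binom{a}{2}}$ shows that for $a=\lceil 3p^{-1}\ln m\rceil$ this probability tends to $0$, so $\alpha(\mathbf{G})<3p^{-1}\ln m$ with probability greater than $\tfrac12$. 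A union bound then produces a concrete graph $G_0$ on $m$ vertices that simultaneously has fewer than $m/2$ short cycles and independence number below $3p^{-1}\ln m$. Deleting one vertex from each cycle of length at most $\ell$ leaves an induced subgraph $G_1$ on more than $m/2\ge n$ vertices with girth strictly greater than $\ell$; deleting vertices cannot raise the independence number, so $\alpha(G_1)\le\alpha(G_0)<3p^{-1}\ln m$.

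Next I would fix the count and the chromatic number. Passing to any induced subgraph $G_2$ of $G_1$ on exactly $n$ vertices preserves girth $>\ell$ and gives $\chi(G_2)\ge n/\alpha(G_2)\ge n/\alpha(G_1)>\frac{np}{3\ln m}=\frac{m^{\theta}}{6\ln m}$, using $m=2n$. To make this at least $s$ it suffices that $m^{\theta}\ge 6s\ln m$, so I would choose $\theta$ just below $1/\ell$ and just above $\frac{\ln(6s\ln m)}{\ln m}$; both constraints are compatible when $\ell$ is of order $\log m/\log s$ with a sufficiently small implied constant, which is the claimed girth because $m=\Theta(n)$. Finally, to turn $\chi(G_2)\ge s$ into $\chi=s$ exactly (the case $s\ge3$; the cases $s\le2$ are trivial), I would delete edges one at a time: each deletion lowers the chromatic number by at most $1$ and never decreases the girth, so stopping the moment $\chi$ first equals $s$ gives an $s$-chromatic graph on $n$ vertices of girth $>\ell$.

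The step I expect to be the real obstacle is the joint calibration of the last paragraph: one must confirm that a single value of $\theta$ can simultaneously satisfy $\theta<1/\ell$ (few short cycles) and $m^{\theta}\gtrsim s\log m$ (large chromatic number) while leaving $\ell$ as large as $\Omega(\log n/\log s)$. The interval for $\theta$ is nonempty precisely when $s$ is not too small relative to $n$, and this is where the $\log s$ in the denominator is forced; intuitively the logarithm is intrinsic, matching the Moore-bound obstruction that a graph of girth $g$ and chromatic number $s$ must have at least roughly $(s-1)^{g/2}$ vertices. In the residual regime of very small $s$ the required girth can instead be realized by elementary means---for $s=3$ an odd cycle on $n$ (or $n-1$) vertices already has girth far exceeding $\log n$.
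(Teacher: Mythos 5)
The paper gives no proof of Proposition~\ref{prop:erdos} (it is quoted from \cite{Erdos59}), so your argument has to stand on its own. The deletion-method skeleton is the right one and most individual steps are sound, but the calibration in your last paragraph does not deliver the stated bound, and the escape hatch you offer for ``very small $s$'' does not close the hole. With $p=m^{\theta-1}$ and the first-moment bound $\alpha<3p^{-1}\ln m$ you get $\chi>\frac{m^{\theta}}{6\ln m}$, so forcing $\chi\ge s$ costs $\theta\ge\frac{\ln s+\ln\ln m+O(1)}{\ln m}$, while controlling short cycles costs $\theta<1/\ell$. Together these cap the achievable girth at $\ell\le\frac{\ln m}{\ln s+\ln\ln m+O(1)}$, which is $\Omega(\log n/\log s)$ with a universal constant only when $\log s\gtrsim\log\log n$. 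For every fixed $s\ge 4$ (and more generally whenever $4\le s\le(\log n)^{c}$ for small $c$) your construction yields only girth $O(\log n/\log\log n)$, asymptotically short of the $\Theta(\log n)$ that the proposition promises and that the matching upper bound $\frac{2\log n}{\log(s-2)}+1$ from \cite{Erdos62}, quoted right after the proposition, shows is the truth. Your fallback covers only $s\le 3$ (odd cycles); for $s=4$ there is no elementary substitute, since a $4$-chromatic graph of girth $g$ must have $2^{\Omega(g)}$ vertices and something must certify that $2^{O(g)}$ suffice.

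The fix stays entirely inside your framework: the $\ln m$ in your independence bound is an artifact of taking $p$ polynomially large, and it becomes $\ln d$ at constant average degree. Set $p=d/m$ with $d=\lceil Cs\ln(s+1)\rceil$ for a large absolute constant $C$ (if $d>m$ then $\log n=O(\log s)$ and $K_s$ plus isolated vertices already has girth $3=\Omega(\log n/\log s)$). The same first-moment computation with $a=\lceil 3m\ln d/d\rceil$ gives
$$
\binom{m}{a}(1-p)^{\binom{a}{2}}\le\Bigl(\tfrac{em}{a}\,e^{-p(a-1)/2}\Bigr)^{a}\le\bigl(e^{O(1)}d^{-1/2}\bigr)^{a}\longrightarrow 0,
$$
so $\alpha(G_0)\le 3m\ln d/d$ and hence $\chi(G_2)\ge n/\alpha(G_0)\ge\frac{d}{6\ln d}\ge s$ by the choice of $C$. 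The expected number of cycles of length at most $\ell$ is at most $\sum_{i=3}^{\ell}d^{i}/(2i)\le d^{\ell}$, which is below $m/4$ for $\ell=\lfloor\log(m/4)/\log d\rfloor=\Omega(\log n/\log(s\log s))=\Omega(\log n/\log s)$. Everything else in your write-up --- Markov, the union bound, deleting one vertex per short cycle, passing to an induced subgraph on exactly $n$ vertices, and trimming edges until $\chi$ equals $s$ exactly --- then goes through verbatim, and this single calibration also subsumes the $s=3$ case.
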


The logarithmic bound in Proposition \ref{prop:erdos} is best possible.
In \cite{Erdos62}, Erd\H{o}s proves that the girth of an $s$-chromatic graph
on $n$ vertices is bounded by $\frac{2\log n}{\log(s-2)}+1$.

\paragraph{The Ne\v{s}et\v{r}il-Zhu bound.}

A useful combinatorial bound for the $H$-width of $G$ is due to Freuder \cite{Freuder90}
who showed that if $G\not\to H$, then 
\begin{equation}
  \label{eq:tw}
W_H(G)\le\tw G+1,  
\end{equation}
where $\tw G$ denotes the treewidth of $G$.\footnote{%
All what is stated here in the language of graphs holds true
also for general relational structures. The treewidth of structure $A$
is defined as the treewidth of its Gaifman graph.}
A complete combinatorial characterization of the $H$-width
is suggested by Hell, Ne\v{s}et\v{r}il, and Zhu \cite{HellNZ96}
for digraphs and extended to general structures by Feder and Vardi \cite{FederV98}. 
Note that $F\to G$ and $F\not\to H$
imply $G\not\to H$. In view of this, we call such a graph $F$
an \emph{$H$-obstruction for $G$}. 
It turns out that $W_H(G)$ is equal to the minimum $k$ such that
$G$ has an $H$-obstruction of treewidth $k-1$.
Note that \refeq{tw} follows from here because if $G\not\to H$,
then $G$ is an $H$-obstruction for itself.

Thus, the statement that the width of $H$ is bounded by $k$
can equivalently be expressed in the following form:
$G\not\to H$ if and only if $F\to G$ for some $F$ such that $F\not\to H$ and $\tw F<k$.
This homomorphism duality justifies the terminology, according to which
graphs $H$ of width $k$ are also said to have \emph{treewidth-$(k-1)$ duality};
see the survey \cite{BulatovKL08}.


Ne\v{s}et\v{r}il and Zhu \cite{NesetrilZ96} show that no non-bipartite
graph $H$ has bounded treewidth duality. This is a consequence of Proposition \ref{prop:erdos}
and the following fact established in \cite{NesetrilZ96}:
If 
\begin{equation}\label{eq:gkk}
g(G)>2^{k+2}(4km)^{4km-1}+2(k+1), 
\end{equation}
$\tw F\le k$, and $F\to G$, then
also $F\to C_{2m+1}$, where $C_{2m+1}$ denotes the cycle of length $2m+1$.
Indeed, suppose that $H$ contains $C_{2m+1}$ as a subgraph. 
Consider a graph $G$ such that $G\not\to H$.
Assuming that $g(G)>8(4m)^{4m-1}$, set $k$ to the largest value
such that \refeq{gkk} is fulfilled; note that $k=\Omega(\log g(G)/\log\log g(G))$.
Then no $F$ of treewidth at most $k$
can serve as an $H$-obstruction for $G$.
It follows that
\begin{equation}
  \label{eq:loglogg}
  W_H(G)=\Omega\of{\frac{\log g(G)}{\log\log g(G)}}
\end{equation}
(due to using the $\Omega$-notation, this bound is trivially true also for $G$
with $g(G)\le8(4m)^{4m-1}$).
The bound \refeq{NesZhu} we stated above can be derived from \refeq{loglogg}
and Proposition \ref{prop:erdos}. The latter gives us a graph $G$
with logarithmic girth and $\chi(G)>\chi(H)$.
The last condition ensures that $G\not\to H$.

\section{Existential-positive $k$-variable logic and existential $k$-pebble game}\label{s:logic}

For graphs we consider the first-order language with two relation symbols
for vertex adjacency and equality.
An \emph{existential-positive first-order} formula $\Phi$ is built using only
monotone Boolean connectives (i.e., conjunction and disjunction)
and existential quantification. 
If such a $\Phi$ is true on a graph $G$
and $G\to H$, then $\Phi$ must hold true also on the graph $H$. 
Moreover, for every finite $G$
there is an existential-positive statement $\Phi_G$ that is true on $H$
if and only if $G\to H$. To obtain $\Phi_G$, we can assign a first-order
variable to each vertex of $G$ and list all adjacency relations between the variables
that are true in $G$ for the corresponding vertices.
This gives us a logical characterization of the homomorphism relation:
$G\not\to H$ if and only if there is an existential-positive statement $\Phi$ that
\emph{distinguishes} $G$ from $H$, that is, $\Phi$ is true on $G$ but false on~$H$.

We define the \emph{width} $W(\Phi)$ of a first-order formula $\Phi$
to be the number of variables occurring in it; different occurrences of
the same variable do not count. The \emph{$k$-variable logic}
consists of formulas of width at most $k$. Suppose that $G\not\to H$.
Kolaitis and Vardi \cite{KolaitisV95} show that $W_H(G)$ is equal to the minimum $k$
such that $G$ is distinguishable from $H$ in existential-positive
$k$-variable logic.

The logical characterization of $W_H(G)$ implies also a useful
combinatorial characterization of this parameter \cite{KolaitisV95}.
The \emph{existential $k$-pebble game} on graphs $G$ and $H$
is a version of the $k$-pebble Ehrenfeucht-Fra\"iss\'e game where
Spoiler always moves in $G$ and
Duplicator's objective is to keep a partial homomorphism.
The parameter $W_H(G)$ is equal to the minimum $k$
such that Spoiler has a winning strategy in the game.
We use this characterization in Sections \ref{s:ind} and~\ref{s:girth}.

\section{A linear lower bound for $W(n)$}\label{s:lower}

Assume that $\chi(G)>3$ and
consider an arbitrary graph $H$ with $\chi(H)\le3$.
Duplicator can use a homomorphism from $H$ to $K_3$
to translate her strategy in the $k$-pebble existential game on $G$ and $H$
into a strategy in the game on $G$ and $K_3$. Therefore,
$W_H(G)\le W(G)$. It follows that
$$
W(n)=\max_{G,H}\setdef{W_H(G)}{|G|=n,\ \chi(G)>3\text{ and }\chi(H)\le3}.
$$
Let $W^*(G,H)$ denote the minimum width of a first-order statement (with no restrictions) distinguishing $G$
from $H$. Obviously, $W^*(G,H)\le W_H(G)$.
Define
$$
W^*(n)=\max_{G,H}\setdef{W^*(G,H)}{|G|=|H|=n,\ \chi(G)>3\text{ and }\chi(H)\le3}
$$
and note that 
$$
W^*(n)\le W(n).
$$
Thus, in order to estimate $W(n)$ from below, it suffices
to prove a lower bound for $W^*(n)$. For this,
we will use the approach of \cite{Dawar98}, which in turn
is based on the Cai-Fürer-Immerman \cite{CaiFI92} construction
of non-isomorphic graphs $G$ and $H$ on $n$ vertices
that cannot be distinguished in first-order logic 
with bounded number of variables (even when counting quantifiers are allowed).
In \cite{Dawar98}, this construction is enhanced to ensure that
one of the graphs $G$ and $H$ is 3-colorable and the other is not.

We will need the following notions.
Recall that a graph is \emph{uniquely} 3-colorable if it is 3-colorable and the
coloring is unique up to a renaming of colors.
Let $X\subseteq V(G)$. The result of removal of all vertices in $X$ from $G$ is denoted by
$G-X$. We call $X$ a \emph{separator} of $G$ if every connected component of $G-X$ has at most
$|V(G)|/2$ vertices.

\begin{lemma}{\bf (\cite{Dawar98})}\label{lem:indist}
  Suppose that $A$ is a graph with the following properties:
  \begin{itemize}
  \item 
 $A$ has $m$ vertices and maximum degree $d$;
\item 
  $A$ is uniquely 3-colorable;
\item 
  $A$ has no separator with $k$ vertices.
  \end{itemize}
Then, one can construct from $A$ two graphs $G_A$ and $H_A$ with the same number
$n=O(dm)$ of vertices and $O(n)$ edges so that $\chi(G_A)\le3$, $\chi(H_A)>3$, and $W^*(G_A,H_A)>k$.
\end{lemma}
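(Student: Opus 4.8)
The goal is the Cai-Fürer-Immerman construction adapted so that one graph is 3-colorable and the other is not, while the two remain indistinguishable in the $k$-variable logic. The plan is to build $G_A$ and $H_A$ by replacing each vertex and each edge of $A$ with small gadgets, in such a way that the global structure encodes a parity constraint: for $G_A$ the constraint is satisfiable and yields a proper 3-coloring, whereas for $H_A$ the parity is flipped so that no consistent 3-coloring exists. The role of unique 3-colorability of $A$ is to force the color pattern of the gadgets to propagate rigidly through the graph, so that a single global obstruction (an odd number of ``twists'') makes $H_A$ non-3-colorable while $\chi(G_A)\le 3$.

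First I would design the gadgets. For each edge of $A$ I would attach a small colour-propagation gadget with two states (``untwisted'' and ``twisted''), both of which are locally 3-colorable, and for each vertex of $A$ I would attach a gadget enforcing that the parity of the twists on its incident edges is even. Placing a single odd twist somewhere gives a graph for which the parities cannot all be satisfied simultaneously; this yields $H_A$, while $G_A$ takes the all-even version. A standard feature of the CFI construction is that the position of the single twist does not matter up to isomorphism when it is moved along a path, so the two graphs differ only by where the twist sits and hence are extremely similar locally. I would verify $\chi(G_A)\le 3$ directly by exhibiting a colouring propagated from a fixed 3-colouring of $A$, and $\chi(H_A)>3$ by the parity argument. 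Since each gadget has bounded size (depending only on $d$), the number of vertices is $n=O(dm)$ and the number of edges is $O(n)$.

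For the lower bound $W^*(G_A,H_A)>k$ I would use the pebble-game characterization of Section~\ref{s:logic}: it suffices to give Duplicator a winning strategy in the $k$-pebble game (this time the full bijective/first-order game, since $W^*$ allows arbitrary first-order sentences, not just existential-positive ones). Duplicator maintains a partial isomorphism between $G_A$ and $H_A$ by keeping track of an ``invisible'' twist that she is free to slide to any edge of $A$ not currently separated by the pebbled region. As long as the pebbles on $A$ do not form a separator, the part of $A$ reachable from the twist is connected and large, so Duplicator can relocate the twist to any gadget away from the pebbles, making the two configurations locally identical. The hypothesis that $A$ has no separator with $k$ vertices is precisely what guarantees that with only $k$ pebbles Spoiler can never pin down the twist, so Duplicator survives and $W^*(G_A,H_A)>k$.

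The main obstacle is the combinatorial bookkeeping of the twist-sliding argument: I must show that whenever fewer than $k$ vertices of $A$ are pebbled, the removal of the corresponding gadgets leaves a connected component containing the current twist and at least one free edge, so that Duplicator can move the twist into the unpebbled region and restore a local isomorphism after each of Spoiler's moves. This is exactly where the ``no separator with $k$ vertices'' condition enters, and making the correspondence between separators of $A$ and winning configurations of the pebble game precise — including the careful treatment of the vertex and edge gadgets under partial pebbling — is the technical heart of the proof. The colouring claims $\chi(G_A)\le 3$ and $\chi(H_A)>3$ are comparatively routine once the gadgets are fixed.
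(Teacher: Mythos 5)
The paper does not actually prove this lemma: it is imported verbatim from \cite{Dawar98}, and Section~4 describes the method only in one sentence (the Cai--F\"urer--Immerman construction \cite{CaiFI92}, enhanced so that one of the two indistinguishable graphs is 3-colorable and the other is not). Your outline reconstructs that approach faithfully: parity-encoding gadgets, a single twist separating $G_A$ from $H_A$, unique 3-colorability forcing rigidity so that the twist is a genuine chromatic obstruction, and a Duplicator strategy that slides the twist away from the pebbled region, with the no-$k$-separator hypothesis guaranteeing this is always possible. You are also right to note that, since $W^*$ refers to unrestricted first-order sentences, Duplicator must maintain a partial isomorphism in the full $k$-pebble game rather than merely a partial homomorphism in the existential one.

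As a proof, however, your text has a genuine gap: the gadgets are never defined, and every subsequent claim is conditional on their unstated properties. The three assertions of the lemma --- $\chi(G_A)\le 3$, $\chi(H_A)>3$, and the local interchangeability of twisted and untwisted configurations that underlies the twist-sliding argument --- must each be verified against a concrete gadget, and designing one whose proper 3-colorings correspond exactly to the solutions of the parity system is the actual content of \cite{Dawar98}. The point where unique 3-colorability is indispensable is the proof that $\chi(H_A)>3$: without it, $H_A$ could evade the parity obstruction by switching to a different 3-coloring of the underlying copy of $A$, and your sketch asserts the needed rigidity rather than deriving it from a specified construction. Likewise, the claim that moving the twist along a path yields an isomorphic graph, and the bookkeeping showing that a non-separating pebble set leaves a large connected region through which the twist can escape, are named as ``the technical heart'' but not carried out. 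Until the gadgets are written down and these verifications performed, this is a plan for a proof rather than a proof.
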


We are now ready to prove our linear lower bound. 

\begin{theorem}\label{thm:lower}
$W^*(n)=\Omega(n)$.
\end{theorem}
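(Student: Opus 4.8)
The plan is to invoke Lemma~\ref{lem:indist} and feed it a well-chosen family of base graphs $A$. The lemma converts any uniquely 3-colorable graph $A$ on $m$ vertices, with maximum degree $d$ and no separator of size $k$, into a pair $(G_A,H_A)$ on $n=O(dm)$ vertices with $\chi(G_A)\le 3$, $\chi(H_A)>3$, and $W^*(G_A,H_A)>k$. To obtain $W^*(n)=\Omega(n)$ we therefore need, for infinitely many $m$, a graph $A$ with all three properties in which simultaneously $d=O(1)$ (so that $n=\Theta(m)$) and the separator bound $k=\Omega(m)=\Omega(n)$. In other words, I need a family of \emph{bounded-degree} uniquely 3-colorable graphs that are good \emph{vertex expanders}, in the sense that every balanced separator has linearly many vertices.

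First I would isolate the expansion requirement: a bounded-degree graph in which every set $X$ whose removal leaves all components of size at most $m/2$ must satisfy $|X|=\Omega(m)$. This is exactly the statement that $A$ has no sublinear-size balanced separator, which is a standard consequence of vertex (or edge) expansion; bounded-degree expander families supply such graphs with $k=\Omega(m)$. The genuinely delicate point, and the step I expect to be the main obstacle, is reconciling expansion with \emph{unique} 3-colorability: random or algebraic expanders are typically far from 3-colorable, let alone uniquely so. The natural fix is to start from a bounded-degree expander and perform a local gadget replacement that forces a rigid 3-coloring (for instance, replacing each vertex or edge by a small uniquely-3-colorable gadget and wiring the gadgets together so that the global coloring is pinned down up to permutation of the three colors), while checking that the substitution changes the degree and the separation number only by constant factors and preserves linear expansion.

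Concretely, the steps in order are: (i) fix a bounded-degree expander family on $m'$ vertices with separation number $\Omega(m')$; (ii) apply the rigidifying gadget construction to produce $A$ on $m=\Theta(m')$ vertices, verifying that $A$ is uniquely 3-colorable, that its maximum degree $d$ is $O(1)$, and that its separation number is still $\Omega(m)$ (a single gadget-replacement inflates any separator by at most a constant factor and, conversely, a separator of $A$ projects to a comparably small separator of the underlying expander); (iii) set $k=\Omega(m)$ from the separation bound and invoke Lemma~\ref{lem:indist} to obtain $G_A,H_A$ on $n=O(dm)=O(m)$ vertices with $W^*(G_A,H_A)>k=\Omega(m)=\Omega(n)$; (iv) conclude that $W^*(n)\ge W^*(G_A,H_A)=\Omega(n)$ for the corresponding values of $n$, and extend to all large $n$ by the usual padding or monotonicity argument.

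The crux is thus entirely the existence of the base graph $A$: bounded degree, uniquely 3-colorable, and expanding all at once. Everything after that is a mechanical application of the lemma together with the bookkeeping $n=\Theta(m)$ and $k=\Omega(m)$. I would therefore devote the bulk of the argument to the gadget construction and to the two-sided comparison of separators between $A$ and its underlying expander, since this is where a constant factor could silently be lost; the expander input and the final invocation of Lemma~\ref{lem:indist} are routine by comparison.
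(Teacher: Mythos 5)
Your reduction of the theorem to Lemma~\ref{lem:indist} is exactly the paper's, and you have correctly isolated the crux: one needs, for arbitrarily large $m$, a single graph $A$ that is simultaneously of bounded degree, uniquely 3-colorable, and without separators of size $o(m)$. But that is precisely the point at which your proposal stops. The ``rigidifying gadget replacement'' is announced rather than given: no gadget is exhibited, and no argument is offered for why wiring small uniquely-3-colorable gadgets along an expander would pin down a \emph{global} 3-coloring up to permutation of colors. This is not a gap one can wave at, because unique 3-colorability is not a local property. The paper's construction makes this vividly clear: it takes $A=B\times K_3$ for a suitable expander $B$, which is indeed a vertex/edge blow-up of the kind you gesture at (each vertex becomes an independent triple, each edge a $C_6$), yet the uniqueness of the 3-coloring of $B\times K_3$ holds \emph{only} because $B$ is connected and $\chi(B)>3$ --- this is the Greenwell--Lov\'asz theorem \cite{GreenwellL74}. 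If $\chi(B)\le 3$ the same product has essentially different 3-colorings, so a purely local rigidity argument cannot succeed; some global hypothesis on the base graph is unavoidable, and your sketch does not identify it.

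Concretely, the paper instantiates $B$ as a random 6-regular graph, which with high probability is connected, has only linear-size balanced separators \cite{Pinsker73}, and has chromatic number 4 \cite{ShiW07}; the last property is what licenses Greenwell--Lov\'asz. It then checks by a projection/lifting argument that separators of $A=B\times K_3$ yield comparably small separators of $B$, which is the two-sided separator comparison you correctly anticipated would be needed. So your plan is the right plan, and steps (i), (iii), (iv) of your outline match the paper, but step (ii) --- the existence of the base graph $A$, which you yourself call ``the crux'' and ``the main obstacle'' --- is missing, and the specific mechanism you propose for it (local gadgets forcing global uniqueness) would not work without an additional global ingredient such as $\chi(B)>3$ together with the product construction. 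As written, the proposal is a correct reduction plus an unproved existence claim, not a proof.
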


\begin{proof}
It suffices to find a graph $A$ with the properties listed in Lemma \ref{lem:indist},
for an arbitrarily large $m$, constant $d$, and linear $k=\Omega(m)$.
This can be done in two steps.

First, take a connected $d'$-regular graph $B$ 
with $m'$ vertices such that $\chi(B)>3$ and the minimum separator size in $B$
is larger than $k=\Omega(m')$. Specifically, we can set $d'=6$.
A random 6-regular graph has the required properties. With high probability,
it is connected and has only linear separators (Pinsker \cite{Pinsker73}), and
its chromatic number is 4 (Shi and Wormald \cite{ShiW07}).

Next, we take $A=B\times K_3$, where $\times$ denotes the categorical product
of graphs. That is,
each vertex $v$ of $B$ has three copies in $A$, namely $(v,1)$, $(v,2)$, and $(v,3)$,
and vertices $(v,i)$ and $(u,j)$ are adjacent in $A$ if
$v$ and $u$ are adjacent in $B$ and $i\ne j$ (i.e., $i$ and $j$ are adjacent
in the complete graph $K_3$ on the vertex set $\{1,2,3\}$).

The graph $A$ is 3-colorable as each of the three copies of $V(B)$ in $A$ are
independent sets. This coloring is unique by the Greenwell-Lov\'asz theorem \cite{GreenwellL74},
which says that the categorical product $G\times K_s$ is uniquely $s$-colorable
whenever $G$ is connected and $\chi(G)>s$.

The graph $A$ has $m=3m'$ vertices, and all vertices of $A$ have degree 12.
Note that $k=\Omega(m')=\Omega(m)$.
It remains to show that, like $B$, the graph $A$ has no separator of size~$k$.

To this end, consider a set $X$ of vertices in $A$ such that $|X|\le k$.
Let $X'$ be the projection of $X$ onto the first coordinate, that is,
the set of those vertices in $B$ that occur as the first
components of the vertices in $X$. Obviosly, $|X'|\le|X|\le k$. 
Therefore, $B-X'$ has a connected
component $C'$ of size exceeding $m'/2$. Let us lift $C'$ to $A$ and denote 
the resulting set by $C$; that is,
let each vertex $v$ in $C'$ contribute three vertices $(v,1)$, $(v,2)$, and $(v,3)$ into $C$.
Since $C'$ spans a connected subgraph in $B$, the set $C$ spans a connected subgraph
in $A$ (because if $v$ and $u$ are adjacent vertices in $C'$ then their clons
$(v,1)$, $(v,2)$, $(v,3)$, $(u,1)$, $(u,2)$, and $(u,3)$ in $C$ span
a connected subgraph $K_2 \times K_3=C_6$ in $A$). Note now that $C$ and $X$ are disjoint by construction.
Therefore, $C$ is a connected component of $A-X$ having size $|C|=3|C'|>3m'/2=m/2$,
and $X$ cannot be a separator of~$A$.

Theorem \ref{thm:lower} follows now by Lemma~\ref{lem:indist}.
\end{proof}

\section{Relationship between the width and the independence number}\label{s:ind}

The main technical tool in our further analysis is the existential $k$-pebble game on 
graphs $G$ and $H$.
For the special case that $H=K_3$, we recast it
in slightly different terms. The \emph{$k$-width 3-coloring game on a graph $G$}
is played by Spoiler and Duplicator. In a round of the game Spoiler
selects a vertex in $G$ and then Duplicator colors it in one of three colors,
red, blue, or green. After each round, at most $k$ vertices are allowed
to be colored. To obey this condition, Spoiler can erase the color
of a previously colored vertex before he demands to color a new one.
Duplicator wins the $r$-round game if during the play there is no
two adjacent vertices colored in the same color
(i.e., after each round the partial 3-coloring of $G$ is proper).
The following fact is a particular case of the relationship 
between the existential $k$-pebble game and existential $k$-variable logic~\cite{KolaitisV95}.

\begin{proposition}\label{prop:game}
Suppose that $\chi(G)>3$. Then
$W(G)$ is equal to the minimum $k$ such that, for some $r$, Spoiler has a winning strategy
in the $r$-round $k$-width 3-coloring game on~$G$.  
\end{proposition}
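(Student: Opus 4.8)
The plan is to recognize that the $k$-width 3-coloring game on $G$ is nothing but a renaming of the existential $k$-pebble game on $G$ and $K_3$, and then to invoke the Kolaitis--Vardi characterization recalled in Section~\ref{s:logic}. To set up the dictionary, I would identify the three colors red, blue, green with the three vertices of $K_3$. A single round of the coloring game---Spoiler naming a vertex $v$ of $G$ and Duplicator assigning it a color---is then exactly one round of the pebble game, in which Spoiler places a pebble on $v$ in $G$ and Duplicator places the matching pebble on the chosen vertex of $K_3$. Spoiler's right to erase a previously assigned color before coloring a new vertex corresponds to picking up a pebble and re-placing it, and the rule that at most $k$ vertices carry a color at any time is precisely the constraint that at most $k$ pebbles are on the board. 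Finally, the winning conditions match: the partial coloring present after a round is proper if and only if the partial map it induces from the pebbled vertices of $G$ into $K_3$ preserves edges, i.e.\ is a partial homomorphism. Thus Duplicator survives a round in one game exactly when she survives it in the other.

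Granting this identification, the characterization of Kolaitis and Vardi \cite{KolaitisV95} quoted in Section~\ref{s:logic} says that $W(G)=W_{K_3}(G)$ equals the least $k$ for which Spoiler has a winning strategy in the existential $k$-pebble game on $G$ and $K_3$; note that $\chi(G)>3$ guarantees $G\not\to K_3$, so this characterization applies. The only point still needing attention is the clause ``for some $r$'': the pebble game is usually phrased as a game of unbounded length, with Duplicator winning precisely when she can maintain a partial homomorphism through every round, whereas the statement asks for a winning Spoiler strategy in a game of some fixed finite length~$r$.

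Closing this gap---which is the only substantive step---rests on finiteness. A position is determined by the placement of the $k$ pebble pairs, each either off the board or pinning a vertex of $G$ to a vertex of $K_3$, so there are at most $(3m+1)^{k}$ positions, where $m=|V(G)|$; in particular only finitely many. The game is therefore a finite reachability game, and I would argue in the standard way that if Spoiler has any winning strategy then he has one in which no position is ever repeated: whenever a play would revisit a position, Spoiler resumes instead the continuation he used the first time that position occurred. Such a strategy forces a loss for Duplicator within a number of rounds bounded by the number of positions, and this bound serves as the required~$r$. The converse is immediate, since a Spoiler win in the $r$-round game is in particular a Spoiler win in the unbounded game. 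I expect the bookkeeping for this no-repetition argument to be the main (though entirely routine) obstacle; everything else is a direct translation of definitions.
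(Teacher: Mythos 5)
Your proposal is correct and follows exactly the route the paper intends: the paper offers no proof beyond observing that the $k$-width 3-coloring game is a recasting of the existential $k$-pebble game on $G$ and $K_3$ and citing the Kolaitis--Vardi characterization, which is precisely your dictionary plus the same citation. Your additional finiteness argument (bounding the number of positions and eliminating repeated positions, or equivalently K\"onig's lemma on the finitely branching game tree) correctly closes the ``for some $r$'' gap that the paper leaves implicit.
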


We now relate the width of a graph to its independence number.

\begin{theorem}\label{thm:ind}
Let $G$ be a graph with $n$ vertices. If $\chi(G)>3$, then
$$W(G)\le\min\{3\,\alpha(G),n-\alpha(G)\}+1.$$
\end{theorem}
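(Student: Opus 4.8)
The plan is to establish the two upper bounds separately using the $k$-width $3$-coloring game from Proposition~\ref{prop:game}, exhibiting an explicit winning strategy for Spoiler in each case. Since $W(G)$ is the minimum number of simultaneously available colored vertices that Spoiler needs to win, the task reduces to describing a Spoiler strategy that forces a monochromatic edge (a conflict) while never keeping more than $\min\{3\alpha(G),\,n-\alpha(G)\}+1$ vertices colored at any time. The overall structure exploits a simple but crucial fact: since $\chi(G)>3$, no proper $3$-coloring of all of $G$ exists, so any partial coloring Duplicator maintains must eventually be exposed as non-extendable once Spoiler confronts enough of the graph.

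\medskip\noindent\emph{The bound $W(G)\le n-\alpha(G)+1$.} Fix a maximum independent set $I$ with $|I|=\alpha(G)$, and let $U=V(G)\setminus I$, so $|U|=n-\alpha(G)$. My strategy for Spoiler is to first play all $n-\alpha(G)$ vertices of $U$, forcing Duplicator to color them; this uses at most $n-\alpha(G)$ pebbles. The key point is that if Duplicator could properly color $U$ \emph{and} all of $I$ consistently, she would have a proper $3$-coloring of $G$, contradicting $\chi(G)>3$. Hence any coloring of $U$ Duplicator produces fails to extend to at least one vertex $v\in I$: every one of the three colors on $v$ would clash with a neighbor of $v$ lying in $U$. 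Spoiler then plays $v$ as the $(n-\alpha(G)+1)$-st vertex; whatever color Duplicator assigns creates a conflict with an already-colored neighbor in $U$. Thus Spoiler wins with $n-\alpha(G)+1$ vertices on the board. The mild subtlety to verify is that the non-extendability is forced for \emph{some} $v$ regardless of Duplicator's coloring of $U$, which follows because otherwise the colorings would combine into a global proper $3$-coloring.

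\medskip\noindent\emph{The bound $W(G)\le 3\alpha(G)+1$.} Here the idea is that Spoiler need only probe the independent set $I$ together with a small witness neighborhood. I would have Spoiler repeatedly expose that $I$ cannot be colored in a way compatible with its surroundings: because $G$ is not $3$-colorable, and $I$ meets every color class of any hypothetical coloring, Spoiler can restrict attention to the roughly $\alpha(G)$ vertices of $I$ and, for each, bring in a bounded number of conflicting neighbors, keeping the total number of live pebbles at $3\alpha(G)+1$. The precise accounting---showing that three pebbles per independent-set vertex, plus one, suffice to trap Duplicator---is the part requiring care, and this is where I expect the main obstacle to lie. The challenge is to argue that Spoiler can localize the unavoidable conflict to a region controlled by $I$ without ever needing more than three witnesses per vertex of $I$ simultaneously, reusing pebbles by erasing colors on vertices whose role is complete.

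\medskip The main difficulty, then, is the second bound: the first is a clean pigeonhole-style argument, but the $3\alpha(G)$ bound demands a genuine strategy in which Spoiler economically reuses a bounded budget of pebbles tied to the independent set. The natural approach is to think in terms of the contrapositive---if Duplicator survives with fewer than $3\alpha(G)+1$ pebbles, one extracts a consistent partial coloring that can be completed to a proper $3$-coloring of $G$---and to design Spoiler's moves so that surviving $3\alpha(G)$ rounds forces exactly such a completion, contradicting $\chi(G)>3$.
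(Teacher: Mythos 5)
Your first bound is correct and is exactly the paper's argument: Spoiler colors all of $V(G)\setminus I$, the resulting partial coloring fails to extend to some $v\in I$ (else $G$ would be $3$-colorable), and one more pebble wins. No issues there.

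The second bound, however, is a genuine gap: you describe a plan but do not prove it, and the plan as sketched would not work. The idea of "three witness neighbors per vertex of $I$" runs into the problem that Duplicator, not Spoiler, chooses the colors; Spoiler cannot force three designated neighbors of a vertex $v\in I$ to receive three distinct colors, so placing three pebbles near each vertex of $I$ does not localize a conflict. The unavoidability of a conflict is a global consequence of $\chi(G)>3$, which is precisely why your first argument had to color an entire vertex cover $V(G)\setminus I$ at once. The paper's actual proof of the $3\alpha(G)+1$ bound sidesteps any new strategy: pass to a smallest $4$-chromatic induced subgraph $G'$ of $G$ (which exists since $\chi(G)\ge 4$, and satisfies $\chi(G')=4$). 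The elementary inequality $\chi(H)\,\alpha(H)\ge |V(H)|$ gives $v(G')\le 4\,\alpha(G')$, and clearly $\alpha(G')\le\alpha(G)$ and $W(G)\le W(G')$ (Spoiler simply plays inside the subgraph). Applying your first bound to $G'$ then yields
$$
W(G)\le W(G')\le v(G')-\alpha(G')+1\le 3\,\alpha(G')+1\le 3\,\alpha(G)+1 .
$$
So the missing ingredient is the reduction to a minimal $4$-chromatic subgraph combined with the chromatic-number/independence-number inequality, not a refined pebble-reuse strategy.
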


\begin{proof}
Let $v(H)$ denote the number of vertices in a graph $H$;
thus, $v(G)=n$.
  We first prove the bound 
  \begin{equation}
    \label{eq:alpha}
W(G)\le v(G)-\alpha(G)+1.   
  \end{equation}
Let $U$ be an independent set in $G$ with $\alpha(G)$ vertices.
Consider the 3-coloring game on $G$ and let Spoiler
select all vertices in $V(G)\setminus U$.
Suppose that Duplicator manages to properly color this subgraph of $G$.
This partial coloring does not extend properly to
some vertex $u\in U$ for else the whole $G$ would be 3-colorable.
In the next round Spoiler selects also this vertex and wins.

To prove the bound $W(G)\le 3\,\alpha(G)+1$,
we apply the bound \refeq{alpha} to a smallest 4-chromatic
induced subgraph $G'$ of $G$. Note that $v(G')\le4\,\alpha(G')$
and $\alpha(G')\le\alpha(G)$. Therefore,
$$
W(G)\le W(G')\le v(G')-\alpha(G')+1\le 3\,\alpha(G')+1\le 3\,\alpha(G)+1,
$$
as claimed.
\end{proof}

Theorem \ref{thm:ind} immediately implies an improvement on the trivial upper bound $W(n)\le n$.

\begin{corollary}\label{cor:upper}
 $W(n)\le\frac34n+1$.
\end{corollary}

\section{Relationship between the width and the girth}
\label{s:girth}

We here show a relation between the width of a non-3-colorable graph $G$,
which is much stroger than the bound \refeq{loglogg} in the case $H=K_3$.

\begin{theorem}\label{thm:girth}
If $\chi(G)>3$, then
$W(G)>\frac1{16}\,g(G)$.
\end{theorem}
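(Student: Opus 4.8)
The plan is to use the game characterization of Proposition~\ref{prop:game} and to exhibit a surviving Duplicator strategy. Writing $g=g(G)$, it suffices to show that whenever $16k\le g$ Duplicator can play the $k$-width $3$-coloring game on $G$ forever without ever producing two adjacent vertices of the same colour. Indeed, by Proposition~\ref{prop:game} this gives $W(G)>k$ for $k=\lfloor g/16\rfloor$, and hence $W(G)>\frac1{16}\,g$ since $\lfloor x\rfloor+1>x$. So I fix $k$ with $16k\le g$ and aim to keep the partial $3$-coloring proper under any play of Spoiler.

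First I would record the geometric consequences of large girth that make this possible. At every moment at most $k$ vertices are coloured, and since any cycle of $G$ uses at least $g>k$ vertices, the coloured set $S$ induces a \emph{forest} $G[S]$. Moreover, a newly selected vertex $v$ cannot have two coloured neighbours $u_1,u_2$ in the same tree of $G[S]$: together with $v$ they would close a cycle of length $d(u_1,u_2)+2\le k+1<g$, which is impossible. Thus $v$ has at most one coloured neighbour in each tree, and Duplicator can legally colour $v$ \emph{provided} that these attachment points (one per tree) use at most two of the three colours. The whole difficulty is therefore reduced to maintaining, as an invariant throughout the play, that the coloured neighbourhood of \emph{any} vertex never exhibits all three colours.

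To maintain this invariant I would have Duplicator colour \emph{as if $2$-colouring a forest}: inside each tree she colours by the parity of the distance to a fixed root of that tree, using only two of the colours, and she calls on the third colour solely to repair the parity defect that is forced when a new vertex merges two trees whose attachment points carry opposite parities. Since $G$ is far from bipartite such defects are genuinely unavoidable, and, because a pebbled vertex's colour is frozen until Spoiler erases it, Duplicator cannot realign parities by flipping the colouring of a whole tree. Here the girth buys the needed room: every surviving tree has fewer than $g/16$ vertices, so the defect vertices created during the play remain far apart along the forest, and a short estimate using the girth bound should show that no single vertex can be adjacent to a red, a blue \emph{and} a green coloured vertex at once. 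Quantifying exactly how much girth each such separation consumes is what produces the constant $16$.

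The step I expect to be the main obstacle is precisely this bookkeeping of defects under the erasure and re-selection of pebbles. Because Duplicator must commit colours online and can never undo a frozen commitment, the crux is a separation/potential argument showing that her past commitments never conspire to drive all three colours into one neighbourhood; the naive ``green only at merges'' rule can fail (a later vertex may attach simultaneously to a repaired green vertex and to a red and a blue vertex in two further trees), so the repair mechanism must be chosen adaptively and the invariant strengthened accordingly. Once the strengthened invariant is verified to survive each of the three move types — starting a new tree, extending an existing tree, and merging trees — Proposition~\ref{prop:game} finishes the proof.
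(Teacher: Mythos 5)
Your reduction of the problem is sound as far as it goes: for $k\le g(G)/16$ the coloured set $S$ does induce a forest, a requested vertex $v$ can have at most one coloured neighbour in each tree of $G[S]$ (else a cycle of length at most $k+1<g(G)$ would arise), and hence Duplicator survives as long as she can maintain the invariant that no uncoloured vertex ever sees all three colours among its coloured neighbours. But the argument stops exactly where the work begins. The invariant is never stated in a checkable form, and the strategy you propose for maintaining it (parity $2$-colouring within each tree, the third colour reserved for merge defects) is one you yourself observe to be inadequate: a later request can attach simultaneously to a green defect vertex and to a red and a blue vertex in two further trees, and nothing in the write-up rules this out. Saying that ``the repair mechanism must be chosen adaptively and the invariant strengthened accordingly'' names the missing lemma rather than supplying it, and the constant $16$ is never derived --- the phrase ``a short estimate using the girth bound should show\dots'' stands in for the entire quantitative content of the theorem. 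Note also that girth gives no separation between coloured vertices lying in \emph{different} trees of $G[S]$ (two such vertices can be at distance $2$ through an uncoloured vertex), so the intuition that defect vertices ``remain far apart along the forest'' does not by itself exclude the bad configuration; this is precisely why the bookkeeping under erasure and re-selection of pebbles is delicate. As it stands, the proposal is a plan whose central step is unproved.

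For comparison, the paper avoids designing a Duplicator strategy altogether and argues from Spoiler's side. It introduces the families $M^k_r$ of minimal $k$-precoloured subgraphs on which Spoiler wins in $r$ rounds, shows any such subgraph has diameter at most $2^r$ and hence, for $r\le\log_2(g(G)-2)-1$, must be a tree inside $G$; it then proves that such a tree has all leaves and only leaves coloured, all internal vertices of degree $3$, and fewer than $k$ leaves, while a separator argument forces it to have more than $2^{R-1}+2$ vertices for $R=\lfloor\log_2(g(G)-2)\rfloor-1$, hence more than $g(G)/16$ leaves; the bound $k>g(G)/16$ then falls out. If you want to complete your Duplicator-side route, you must formulate and verify a strengthened invariant that survives tree creation, extension, merging, \emph{and} pebble erasure, and extract the constant from that verification.
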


The proof of this result takes the rest of this section.
It is based on Proposition \ref{prop:game}. We will show that if Spoiler
has a winning strategy in the $r$-round $k$-width 3-coloring game on $G$,
then $r>\lfloor\log_4(g(G)-2)\rfloor$ and $k>\frac1{16}\,g(G)$.

We write $\game krH$ to denote the $r$-round $k$-width 3-coloring game on 
a graph $H$, where up to $k$ vertices of $H$ can be precolored
before the game begins. Such graphs will be called \emph{$k$-precolored}.
A subgraph $H'$ of a $k$-precolored graph $H$ can inherit vertex colors
present in $H$. That is, if a vertex is colored in $H'$, it must have
the same color in $H$; on the other hand, a vertex colored in $H$ can 
be uncolored in $H'$. Note that a \emph{proper subgraph} $H'$ of $H$
can have all the vertices and the edges as in $H$, but then 
there must be a vertex colored in $H$ and uncolored in $H'$.
If $H'$ and $H$ are $k$-precolored graphs,
then a homomorphism from $H'$ to $H$ has to preserve colors of vertices
in $H'$ (but it can map an uncolored vertex of $H'$ to a colored vertex of $H$).
Saying that a player \emph{wins} a game, we mean that s/he has a winning strategy
in it.

\begin{lemma}\label{lem:play}

\begin{bfenumerate}
\item 
If Spoiler wins $\game kr{H'}$ and $H'$ is a subgraph of a $k$-precolored graph $H$,
then he wins also $\game kr{H}$.
\item 
If Duplicator wins $\game kr{H}$ and $H'\to H$,
then she wins also $\game kr{H'}$.
\end{bfenumerate}
\end{lemma}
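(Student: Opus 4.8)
The plan is to prove the two parts separately, since they concern opposite players and rely on different but dual ideas. Both statements are really about how a strategy transfers between a game on one graph and a game on a related graph, so the natural approach is to have the player who is claimed to win in the target game \emph{simulate} a play in the game where their winning strategy is already known, translating moves back and forth.

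For part \textbf{1}, suppose Spoiler wins $\game kr{H'}$ and $H'$ is a subgraph of the $k$-precolored graph $H$. I would describe Spoiler's strategy in $\game kr{H}$ by letting him run a virtual copy of $\game kr{H'}$. Whenever his winning strategy in $H'$ tells him to select a vertex, he selects the corresponding vertex of $H$ (every vertex of $H'$ is a vertex of $H$, and every edge of $H'$ is an edge of $H$). Duplicator's colorings in the real game on $H$ are copied into the virtual game on $H'$. The key point to verify is that a color placed by Duplicator in the real game is legal for the virtual game: since any edge of $H'$ is also an edge of $H$, a monochromatic edge arising in $H'$ would already be a monochromatic edge in $H$, so Duplicator cannot evade a loss merely by the graph having fewer edges. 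One also has to handle precoloring: a vertex precolored in $H'$ inherits a color consistent with $H$, so Duplicator is no freer in $H$ than in $H'$. Hence any play that is a Spoiler win in $H'$ pushes back to a Spoiler win in $H$.

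For part \textbf{2}, suppose Duplicator wins $\game kr{H}$ and $h\colon H'\to H$ is a color-preserving homomorphism. Now Duplicator plays $\game kr{H'}$ by simulating a parallel game in $H$: when Spoiler picks a vertex $x$ of $H'$, Duplicator imagines Spoiler picking $h(x)$ in $H$, consults her winning strategy there to obtain a color, and assigns that same color to $x$ in $H'$. The property to check is that this yields a proper partial coloring of $H'$: if $x,y$ are adjacent in $H'$, then $h(x),h(y)$ are adjacent in $H$ (homomorphism), so her strategy in $H$ must have given them distinct colors, whence $x,y$ receive distinct colors in $H'$. Color-preservation of $h$ ensures the precolored vertices are simulated consistently. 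The vertex-count constraint $k$ is respected because each real pebble corresponds to exactly one virtual pebble.

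The main subtlety — really the only place where care is needed — is bookkeeping the correspondence of pebbles and erasures so that the width constraint is exactly mirrored in both directions, together with the treatment of the inherited precoloring, especially in part \textbf{1} where a \emph{proper subgraph} may share all vertices and edges with $H$ yet differ by having some vertex uncolored. I expect the homomorphism direction (part \textbf{2}) to be the cleaner of the two, with the ``distinct colors are preserved under the homomorphism'' observation doing the essential work; the subgraph direction (part \textbf{1}) needs only the containment of edges but requires spelling out why an uncolored-in-$H'$ vertex gives Duplicator no advantage in $H$.
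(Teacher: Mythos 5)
Your proof of part 2 is exactly the paper's argument: Duplicator simulates $\game kr{H}$, translating Spoiler's request for $v$ in $H'$ into a request for $h(v)$ in $H$ and copying back the color, with adjacency-preservation of $h$ guaranteeing properness. For part 1 the paper simply remarks that the claim is straightforward and, moreover, follows formally from part 2: since a subgraph inclusion $H'\hookrightarrow H$ is a color-preserving homomorphism, the contrapositive of part 1 is an instance of part 2 (using determinacy of this finite game). Your direct dual simulation for part 1 is also valid, but it carries one piece of bookkeeping that the contrapositive route avoids and that you gesture at without resolving: the precolored vertices count toward the width bound $k$, so if $H$ has more precolored vertices than $H'$, Spoiler's real position on $H$ may be at budget while his virtual position on $H'$ is not, and he must interleave erasures of the surplus precolored vertices of $H$ (always possible, since the virtual position has at most $k-1$ colored vertices before a new coloring request) to stay legal. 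With that sentence added, your part 1 is complete; alternatively, deriving it from part 2 as the paper does sidesteps the issue entirely.
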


\begin{proof}
  The first part is straightforward and also formally follows from the second. 
To prove the second part,
suppose that $h$ is a homomorphism from $H'$ to $H$.
Duplicator simulates $\game kr{H'}$ by $\game krH$.
When Spoiler asks for color of a vertex $v$ in $H'$,
Duplicator interprets this as the request for coloring $h(v)$ in $H$
and colors $v$ according to her strategy in $\game krH$.
This strategy is winning for her in $\game kr{H'}$ because $h$ stays
to be a homomorphism in each round.
\end{proof}

Lemma \ref{lem:play}.1 motivates the following definition.
Define $M^k_r$ to be the family of all minimal $k$-precolored graphs $H$
such that Spoiler wins $\game krH$, where minimality means that
Duplicator wins $\game krK$ for any proper subgraph $K$ of~$H$.

\begin{example}\label{ex:}\rm\mbox{}\\[2mm]
  $\begin{array}{rclcl}
M^k_0&=&\emptyset&\text{if}&k=0,1;\\[1mm]
M^k_0&=&\Set{
\begin{tikzpicture}[scale=.5]
\path (0,0) node[avertex] (a) {}
      (1,0) node[avertex] (b)  {} edge (a);
\end{tikzpicture},
\begin{tikzpicture}[scale=.5]
\path (0,0) node[bvertex] (a) {}
      (1,0) node[bvertex] (b)  {} edge (a);
\end{tikzpicture},
\begin{tikzpicture}[scale=.5]
\path (0,0) node[cvertex] (a) {}
      (1,0) node[cvertex] (b)  {} edge (a);
\end{tikzpicture}
}&\text{if}&k\ge2;\\[1.5mm]
M^k_r&=&M^k_0&\text{if}&k=2,3;\\
M^k_1&=&M^k_0\cup\Set{
\raisebox{-3mm}{
\begin{tikzpicture}[scale=.4]
\path (-.7,-.7) node[avertex] (a) {}
      (0,0) node[vertex] (b)  {} edge (a)
      (.7,-.7) node[bvertex] (c)  {} edge (b)
      (0,1) node[cvertex] (d)  {} edge (b);
\end{tikzpicture}
}}&\text{if}&k\ge4.
  \end{array}$\\[1.5mm]
A few other examples are shown in Fig.~\ref{fig:}.
\end{example}

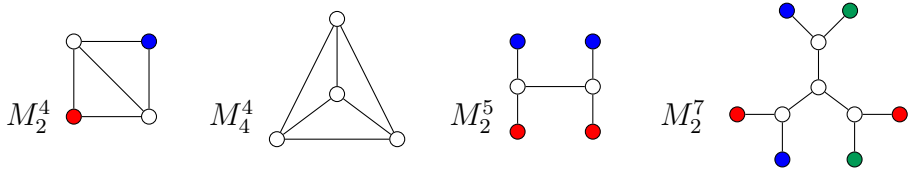
\begin{figure}
\centering
 \begin{tikzpicture}
\begin{scope}
\node at (-.6cm,0cm) {$M^4_2$};
\path (0,0) node[avertex] (a) {}
      (1,0) node[vertex] (b)  {} edge (a)
      (0,1) node[vertex] (c)  {} edge (a) edge (b)
      (1,1) node[bvertex] (d)  {} edge (b) edge (c);
\end{scope}

\begin{scope}[xshift=3.5cm]
\node at (-1.4cm,0cm) {$M^4_4$};
\path[every node/.style=vertex,yshift=3mm] 
      (-.8,-.6) node (a) {}
      (0,0) node (b)  {} edge (a)
      (.8,-.6) node (c)  {} edge (a) edge (b)
      (0,1) node (d)  {} edge (a) edge (b) edge (c);
\end{scope}

\begin{scope}[xshift=5.9cm]
\node at (-.6cm,0cm) {$M^5_2$};
\path[yshift=4mm]
      (0,0) node[vertex] (a) {}
      (0,-.6) node[avertex] (b)  {} edge (a)
      (0,.6) node[bvertex] (c)  {} edge (a)
      (1,0) node[vertex] (d) {} edge (a)
      (1,-.6) node[avertex] (e)  {} edge (d)
      (1,.6) node[bvertex] (f)  {} edge (d);
\end{scope}

\begin{scope}[xshift=9.9cm]
\node at (-1.8cm,0cm) {$M^7_2$};
\path[scale=.6,yshift=6.5mm] 
      (-.8,-.6) node[vertex] (a) {}
      (0,0) node[vertex] (d)  {} edge (a)
      (.8,-.6) node[vertex] (b)  {} edge (d)
      (0,1) node[vertex] (c)  {} edge (d)
      (-1.8,-.6) node[avertex] (a1) {}  edge (a)
      (-.8,-1.6) node[bvertex] (a2) {}  edge (a)
      (1.8,-.6) node[avertex] (b1) {}  edge (b)
      (.8,-1.6) node[cvertex] (b2) {}  edge (b)
      (-.7,1.7) node[bvertex] (c1) {}  edge (c)
      (.7,1.7) node[cvertex] (c2) {}  edge (c);
\end{scope}
 \end{tikzpicture}
\caption{Representatives of some $M^k_r$ families.}
\label{fig:}
\end{figure}

\begin{lemma}\label{lem:diam}
Every graph in $M^k_r$ has diameter at most~$2^r$.
\end{lemma}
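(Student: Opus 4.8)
The plan is to prove the bound by induction on the number of rounds $r$, using the characterization of $M^k_r$ in terms of Spoiler's first move. For the base case $r=0$, Spoiler wins $\game k0H$ precisely when the precoloring of $H$ is already improper, so a minimal such $H$ is a single monochromatic edge (and $M^k_0=\emptyset$ when $k\le1$); in either case the diameter is at most $1=2^0$.

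For the inductive step I would fix $H\in M^k_r$ with $r\ge1$ and examine a winning strategy for Spoiler on $\game krH$. First I would argue that its opening round simply colors a fresh vertex $v$, with no color being erased. Indeed, if the strategy erased some previously colored $w$ before selecting $v$, then the graph $H^\flat$ obtained from $H$ by uncoloring $w$ would be a \emph{proper} subgraph on which Spoiler still wins $\game kr{H^\flat}$ (he replays the same strategy, starting from the position that his erasure would have produced), contradicting minimality of $H$. This is exactly where the $k$-width bookkeeping is discharged: it shows the budget of $k$ colored vertices is never tight on the first move, so coloring $v$ is legal. After $v$ receives a color $c$, Spoiler wins the remaining game $\game k{r-1}{(H,v\mapsto c)}$; I let $H_c$ be a minimal subgraph of $(H,v\mapsto c)$ on which he wins, so that $H_c\in M^k_{r-1}$ and, by the induction hypothesis, $H_c$ has diameter at most $2^{r-1}$.

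The two structural claims I then need are that $v\in H_c$ for every color $c$, and that $H$ is covered by the graphs $\hat H_c$ obtained from the $H_c$ by uncoloring $v$. For the first: were $v\notin H_c$, then $H_c$ (with its inherited precoloring) would be a proper subgraph of $H$ on which Spoiler wins $\game k{r-1}{H_c}$ and hence $\game kr{H_c}$, contradicting minimality. For the second: if some vertex or edge $z$ of $H$ lay in no $\hat H_c$, then $z\ne v$ (since $v\in\hat H_c$), and deleting $z$ would leave each $\hat H_c$, and hence each $H_c$, inside $(H-z,v\mapsto c)$; by Lemma \ref{lem:play}.1 Spoiler then wins $\game k{r-1}{(H-z,v\mapsto c)}$, so by opening with $v$ he wins $\game kr{H-z}$, again contradicting minimality.

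With these claims the conclusion is immediate. Each $\hat H_c$ is a subgraph of $H$ that contains $v$ and has diameter at most $2^{r-1}$, so every vertex of $H=\bigcup_c\hat H_c$ lies within distance $2^{r-1}$ of $v$ in $H$; hence any two vertices are joined through $v$ by a path of length at most $2\cdot2^{r-1}=2^r$, giving diameter at most $2^r$. I expect the delicate point to be the first step of the inductive case: ruling out an erasing opening move and pinning Spoiler's first move to a single vertex $v$ shared by all the pieces $H_c$. That is precisely where minimality must be invoked to keep the round-by-round decomposition aligned with the global $k$-width constraint, and everything else then follows from Lemma \ref{lem:play} and the induction hypothesis.
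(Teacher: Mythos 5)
Your proof is correct and follows essentially the same route as the paper's: induct on $r$, decompose $H$ according to Spoiler's opening vertex $v$ and the three colors Duplicator may answer, use minimality to show that each of the three minimal pieces in $M^k_{r-1}$ contains $v$ and that together they cover $H$, and conclude that the diameter at most doubles at each step. Your extra observation that minimality rules out an erasing first move is a harmless refinement of a point the paper simply elides with ``possibly after erasing the color of some vertex.''
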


\begin{proof}
  Denote the maximum diameter of a graph in $M^k_r$ by $D^k_r$.
Let $H\in M^k_r$. Fix a winning strategy for Spoiler in $\game krH$.
Let $v$ be the vertex claimed by Spoiler in the first round
(possibly after erasing the color of some vertex).
Duplicator can color $v$ in one of the three colors;
denote the three possible results by $H_1,H_2,H_3$
(that is, those are $k$-precolored graphs which can appear after playing the first round).
Spoiler wins the game $\game k{r-1}{H_i}$ for every $i=1,2,3$.
Therefore, each $H_i$ contains a subgraph $H'_i$ belonging to $M^k_{r-1}$.
Each $H'_i$ contains the vertex $v$; otherwise $H'_i$ would be
a proper subgraph $K$ of $H$ such that Spoiler wins $\game krK$.
Furthermore, note that $\bigcup_{i=1}^3V(H'_i)=V(H)$; otherwise the union
of $H'_1$, $H'_2$, and $H'_3$ with the vertex $v$ uncolored
would be a proper subgraph $K$ of $H$ such that Spoiler wins $\game krK$.
It follows that every vertex in $H$ is reachable from $v$
within distance $D^k_{r-1}$ and, therefore, $D^k_r\le2D^k_{r-1}$.
Note that $D^k_0=1$; see Example \ref{ex:}. By induction, $D^k_r\le2^r$.
\end{proof}

Let $M^k_r(G)$ consist of those $k$-precolored graphs in $M^k_r$
whose underlying (uncolored) graphs appear as subgraphs in~$G$.

\begin{lemma}\label{lem:is-tree}
  If $r\le\log_2(g(G)-2)-1$, then every graph in $M^k_r(G)$
is a ($k$-precolored) tree.
\end{lemma}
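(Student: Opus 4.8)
The plan is to argue by contradiction. Suppose, under the hypothesis $r\le\log_2(g(G)-2)-1$, that some $H\in M^k_r(G)$ fails to be a tree. Since by definition the underlying graph of $H$ is a subgraph of $G$, every cycle of $H$ is a cycle of $G$, and hence $H$ has girth at least $g(G)$. In particular $H$ contains a shortest cycle $C$ whose length $\ell$ satisfies $\ell\ge g(G)$. The goal is to show that the mere presence of $C$ forces $H$ to have diameter strictly larger than $2^r$, contradicting Lemma~\ref{lem:diam}.

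The crux is the claim that a shortest cycle is isometrically embedded in $H$: if $u$ and $v$ are antipodal vertices of $C$, then $d_H(u,v)=\lfloor\ell/2\rfloor$, the distance measured along $C$. I would prove this by assuming a shorter path $P$, of length $p<\lfloor\ell/2\rfloor$, existed, and letting $A$ be the shorter of the two arcs of $C$ between $u$ and $v$, so that $|A|=\lfloor\ell/2\rfloor$. Both $P$ and $A$ join $u$ to $v$, and they are distinct since $p<|A|$. Viewing edge sets over $\mathrm{GF}(2)$, the symmetric difference $P\triangle A$ has empty boundary and is nonempty, hence contains a cycle; this cycle lies in $H$ and so has length at least $\ell$. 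But it is contained in $P\cup A$, which has at most $p+\lfloor\ell/2\rfloor<\ell$ edges, a contradiction. This step---ruling out every shortcut, not merely chords of $C$---is the main obstacle, and it is precisely where the girth is used.

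Granting the claim, $\mathrm{diam}(H)\ge\lfloor\ell/2\rfloor\ge\lfloor g(G)/2\rfloor$. On the other hand, Lemma~\ref{lem:diam} gives $\mathrm{diam}(H)\le 2^r$, and the hypothesis on $r$ yields $2^r\le(g(G)-2)/2$. Combining these inequalities produces $\lfloor g(G)/2\rfloor\le(g(G)-2)/2$. Since $\lfloor g(G)/2\rfloor\ge(g(G)-1)/2>(g(G)-2)/2$ for every integer $g(G)\ge3$, this is impossible. Hence no non-tree member of $M^k_r(G)$ can exist, and every graph in $M^k_r(G)$ is a ($k$-precolored) tree, as required.
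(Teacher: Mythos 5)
Your proof is correct and follows essentially the same route as the paper: bound the diameter of $H$ via Lemma~\ref{lem:diam}, then observe that a short-diameter subgraph of $G$ cannot contain a cycle because two antipodal vertices of a (shortest) cycle must be far apart. The only difference is that you carefully justify, via the symmetric-difference argument, the isometry claim that the paper states without proof (``any cycle in $G$ contains two vertices at distance at least $(g(G)-1)/2$''), which is a welcome but not substantively different elaboration.
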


\begin{proof}
Let $H\in M^k_r(G)$ where $r\le\log_2(g(G)-2)-1$. 
It follows from Lemma \ref{lem:diam} that the diameter of $H$ is smaller
than $(g(G)-1)/2$ (in particular, $H$ is connected). 
A subgraph of $G$ with such diameter must be acyclic
because any cycle in $G$ contains two vertices at the distance
at least $(g(G)-1)/2$ from each other.
\end{proof}

Fix $R=\lfloor\log_2(g(G)-2)\rfloor-1$.
Since any tree in $M^k_r$ must have colored vertices,
Lemma \ref{lem:is-tree} implies that
$G$ does not have any subgraph in $M^k_R$.
By the definition of $M^k_r$,
Spoiler cannot win $\game kRG$, whatever~$k$.

Our further proof strategy is the following.
Fix $k$ such that Spoiler wins $\game ksG$ for some $s>R$.
Let $\kappa_r$ denote the minimum number of colored vertices
in a tree from $M^k_r\setminus M^k_{r-1}$.
Assume that both Spoiler and Duplicator play optimally, that is,
Spoiler always minimizes the number of rounds till his win, while Duplicator
always maximizes the number of rounds till her loss.
In the round that Spoiler is able to win within the next $R$ moves, the graph
$G$ (endowed with the current partial coloring) 
must contain a subgraph $S$ belonging to $M^k_R$
and cannot contain any subgraph from $M^k_{R-1}$.
By Lemma \ref{lem:is-tree}, the subgraph $S$ is a tree from $M^k_R\setminus M^k_{R-1}$
and, therefore, $k\ge\kappa_R$. In the remainder of the proof
we will estimate the value of $\kappa_R$ from below.

Given a tree $T$ and its vertex $v$, let $T-v$ denote the graph
obtained from $T$ by removing $v$. A \emph{$v$-branch} of $T$
is a subtree of $T$ containing a connected component of $T-v$
along with the vertex~$v$.

\begin{lemma}\label{lem:sep}
  Let $T$ be a tree with $n\le2^r+2$ vertices and $l<k$ leaves. Suppose that
the leaves of $T$ are colored and that this coloring does not extend
to a proper 3-coloring. Then Spoiler wins $\game krT$.
\end{lemma}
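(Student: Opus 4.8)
The plan is to prove the statement by induction on $r$, using a centroid of the tree to dictate Spoiler's first move and then invoking Lemma~\ref{lem:play}.1 to transport the resulting win from a single branch back to the whole tree. First I would observe that the hypotheses force $T$ to have at least four vertices: any proper coloring of the (at most two) leaves of a tree on at most three vertices extends to a proper $3$-coloring, so for $n\le 3$ — hence for $r=0$, since $2^0+2=3$ — there is nothing to prove and the base case is vacuous. (If the leaf coloring is itself improper, which can only happen on a single edge, Spoiler has already won.) For the inductive step, with $r\ge 1$, I would let $v$ be a centroid of $T$, so that every component of $T-v$ has at most $n/2$ vertices; since $n\ge 4$ forces $v$ to be internal (removing a leaf would leave a component of size $n-1>n/2$), the vertex $v$ has degree at least two, and therefore $T-v$ has at least two components, each of which carries at least one leaf of $T$.

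Spoiler's first move is to claim $v$. If Duplicator cannot color $v$ without creating a monochromatic edge, Spoiler wins immediately; otherwise Duplicator colors $v$ with some color $c$. The key structural observation is that a proper $3$-coloring of $T$ extending the prescribed leaf colors is exactly a choice of a color for $v$ together with a proper extension inside each $v$-branch consistent with that color. Since no such global coloring exists, for the particular color $c$ there must be a $v$-branch $B$ whose inherited coloring — namely the leaves of $T$ lying in $B$ together with $v$ colored $c$ — does not extend to a proper $3$-coloring of $B$. I would then verify that $B$ is again a legal instance of the lemma with parameter $r-1$. Writing $C$ for the component of $T-v$ contained in $B$, we have $|V(B)|\le n/2+1\le (2^r+2)/2+1=2^{r-1}+2$; moreover, since only the degree of $v$ changes when we pass to $B$, the colored vertices of $B$ are precisely its leaves, namely the leaves of $T$ inside $C$ together with $v$, and these number at most $(l-1)+1=l<k$, the saving of one leaf coming from the fact that another component of $T-v$ also carries a leaf of $T$.

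By the induction hypothesis Spoiler wins $\game{k}{r-1}{B}$. To convert this into a win on the whole tree I would appeal to Lemma~\ref{lem:play}.1: the branch $B$ is a subgraph of the $k$-precolored graph $T'$ obtained from $T$ by additionally coloring $v$ with $c$, and $T'$ is genuinely $k$-precolored because it has $l+1\le k$ colored vertices. Hence Spoiler wins $\game{k}{r-1}{T'}$, and prefixing the move that claimed $v$ yields a winning strategy in $\game{k}{r}{T}$. The step I expect to require the most care is the double bookkeeping that keeps both the size bound and the leaf bound inside the inductive envelope: the hypothesis $n\le 2^r+2$ is calibrated precisely so that halving the tree at a centroid lands in $2^{r-1}+2$, while the strict inequality $l<k$ survives only because the centroid has degree at least two, forcing the failing branch to omit at least one original leaf. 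The appeal to Lemma~\ref{lem:play}.1 is what lets me sidestep any explicit pebble-erasing argument, at the modest cost of checking that the intermediate configuration $T'$ never exceeds the width budget $k$.
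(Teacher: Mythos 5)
Your proof is correct and takes essentially the same route as the paper's: induction on $r$, with Spoiler playing a single-vertex separator (centroid) of $T$ and recursing into a $v$-branch whose inherited coloring fails to extend, each branch having at most $2^{r-1}+2$ vertices. You merely make explicit some bookkeeping that the paper leaves implicit, namely the verification that the branch again has fewer than $k$ colored leaves and the appeal to Lemma~\ref{lem:play}.1 to lift the win from the branch back to~$T$.
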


\begin{proof}
  We proceed by induction on $r$. For the base cases of $r=0,1$, see the entries for
$M^k_0$ and $M^k_1$ in Example \ref{ex:}. If $n>4$, Spoiler uses a standard separator strategy.
Every tree $T$ with $n$ vertices has a single-vertex separator, that is, a vertex $v$
such that every component of $T-v$ has at most $n/2$ vertices.
In the first round Spoiler asks Duplicator to color such $v$.
Whatever color is used by Duplicator, there is a $v$-branch whose coloring is not properly 
extendable. From now on Spoiler plays in this branch and wins in the remaining $r-1$
rounds by the induction assumption because every $v$-branch has at most $2^{r-1}+2$ vertices.
\end{proof}

\begin{lemma}\label{lem:trees-M}
Every tree $T$ in $M^k_r$ has the following properties.
  \begin{bfenumerate}
  \item 
All leaves and no other vertices in $T$ are colored.
\item 
Every non-leaf vertex in $T$ has degree~3.
\item 
$T$ has less than $k$ leaves.
  \end{bfenumerate}
\end{lemma}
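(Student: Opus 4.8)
The plan is to exploit the minimality of $T$ in two complementary ways: to contradict minimality I will either exhibit a colour-preserving homomorphism from $T$ onto a proper subgraph $K$ and invoke Lemma~\ref{lem:play}.2, so that Duplicator's win on $K$ transfers back to $T$, or I will exhibit a proper subgraph on which Spoiler still wins. Throughout I use the elementary fact that if the precolouring of a tree extends to a proper $3$-colouring, then Duplicator wins $\game krT$ for every $r$ (she answers according to a fixed extension); hence the precolouring of $T$ itself does not extend, and it is precisely the proper subgraphs whose precolouring is still non-extendable that give the binding minimality constraints.

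First I would settle the ``every leaf is coloured'' part of Property~\textbf{1}, and then Property~\textbf{3}. If a leaf $\ell$ with neighbour $u$ is uncoloured and $\deg u\ge 2$, I pick another neighbour $w$ of $u$ and map $T$ to $T-\ell$ by sending $\ell\mapsto w$ and fixing everything else; this is a colour-preserving homomorphism onto the proper subgraph $T-\ell$, so Lemma~\ref{lem:play}.2 turns Duplicator's win there into a win on $T$, a contradiction (the degenerate case $\deg u=1$ forces $T$ to be a single edge with an uncoloured endpoint, on which Spoiler cannot win at all). For Property~\textbf{3}, since every leaf is coloured, the number of leaves is at most the number of precoloured vertices, which is at most the width $k$; strictness comes from a budget argument: if all $k$ colour-slots were occupied, Spoiler's winning first move would be forced to erase some colour, and the position reached after that erasure is exactly a proper subgraph of $T$ on which his continuation still wins, contradicting minimality.

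The heart of the proof is the ``no internal vertex is coloured'' part of Property~\textbf{1} together with Property~\textbf{2}. For an internal vertex $v$ with branches $B_1,\dots,B_t$, I would run the bottom-up list-colouring propagation: processing each branch from its leaves towards $v$, the set of colours available at $v$ through $B_i$ is all three colours minus at most one forced colour, \emph{unless} $B_i$ is already non-extendable on its own. Minimality forbids an independently non-extendable branch, so each branch forbids at most one colour at $v$. As the precolouring of $T$ is non-extendable, the branches must jointly forbid all three colours at $v$, which needs at least three branches; and any redundant branch (one forbidding no colour, or a colour already forbidden by another) could be deleted, so minimality forces exactly three branches forbidding three distinct colours, i.e.\ $\deg v=3$. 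The same analysis shows $v$ cannot itself be coloured: a colour on $v$ would render one of its branches non-extendable in isolation, again a smaller obstruction.

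The main obstacle is the recurring step ``a non-extendable proper subtree is a smaller obstruction'', i.e.\ converting combinatorial non-extendability into a genuine \emph{$r$-round} Spoiler win on a proper subgraph, so that game-minimality is actually violated. I would discharge this by induction on $r$ through the recursion already used in the proof of Lemma~\ref{lem:diam}: Duplicator's three colour responses to Spoiler's first probe $v$ produce subtrees $H_1',H_2',H_3'\in M^k_{r-1}$, each containing $v$ and, by minimality, together covering all vertices and all edges of $T$. Applying the inductive hypothesis to the $H_i'$, in which $v$ is a coloured leaf, transfers the degree-$3$ and colouring structure from the pieces to $T$; since $T$ is a tree, each connected piece is automatically a subtree. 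The delicate bookkeeping is to show that the union of the three pieces leaves no edge of $T$ uncovered and uncolours no vertex that is coloured in $T$, which is exactly where minimality of $T$ must be invoked once more, and where the base cases $r=0,1$ of Example~\ref{ex:} and the small-$k$ degeneracies have to be checked by hand.
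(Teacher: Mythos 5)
Your treatment of the two easy parts is sound. The ``every leaf is coloured'' half of Part 1 is exactly the paper's argument (the retraction $T\to T-\ell$ plus Lemma~\ref{lem:play}.2), and your proof of Part 3 is correct and in fact more direct than the paper's: the position obtained by erasing one leaf's colour is a proper subgraph of $T$ in the paper's sense (same vertices and edges, one fewer coloured vertex), and Spoiler's continuation wins $\game kr{\cdot}$ on it, contradicting minimality immediately; the paper instead routes this through the subtrees $\Tr,\Tb,\Tg\in M^k_{r-1}$ and Part~1.

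The gap is in the core of the lemma --- ``no internal vertex is coloured'' and ``every internal vertex has degree exactly $3$'' --- and it is precisely the step you flag as the main obstacle; your proposed fix does not close it. Your list-colouring propagation repeatedly needs the principle that a \emph{combinatorially} non-extendable proper subtree of $T$ (an independently non-extendable branch, or $T$ with a redundant branch deleted) violates minimality, i.e.\ that Spoiler wins $\game kr{\cdot}$ on it with the \emph{same} $k$ and $r$. That implication is false in general: non-extendability only guarantees a Spoiler win for some sufficiently large width and round budget, and Lemma~\ref{lem:sep}, the one tool that converts non-extendability into a $(k,r)$-win, requires the tree to have fewer than $k$ leaves, at most $2^r+2$ vertices, and \emph{all leaves coloured} --- none of which is available for an arbitrary branch $B_i$ (whose ``leaf'' $v$ is uncoloured), and the whole point of the theorem is that trees in $M^k_R$ are large, so their subtrees need not meet these bounds. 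The induction you fall back on (Spoiler's first probe $v$ and the three responses $\Tr,\Tb,\Tg\in M^k_{r-1}$) is indeed the paper's route, but there it is used to transfer the degree bound and the leaf-colouring property from the pieces to $T$; it does not by itself establish that each branch at an internal vertex forbids at most one colour, which is what your propagation needs. Two sub-arguments remain missing from your sketch: (i) that a coloured internal vertex $w$ yields a $w$-branch on which Spoiler genuinely wins $\game kr{\cdot}$ --- the paper gets this by observing that if Duplicator won $\game krB$ on every $w$-branch $B$ then the precolouring of $T$ would extend and she would win $\game krT$; and (ii) the exclusion of degree-$2$ internal vertices, for which the paper runs a separate, purely combinatorial induction showing that \emph{any} colouring of the leaves of a tree containing a degree-$2$ internal vertex extends to a proper $3$-colouring. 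Neither follows from transferring the degree-$\le 3$ structure of the pieces $\Tr,\Tb,\Tg$ to~$T$.
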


\begin{proof}
 {\bf 1.} 
We first argue that all leaves are colored.
Suppose that $T$ has at least 3 vertices; otherwise the claim is trivial.
Assume that $T$ has a non-colored leaf $v$ adjacent to a vertex $u$.
Mapping $v$ to another neighbor of $u$ and each other vertex to itself 
is a homomorphism of $T$ onto $T-v$. By the definition of $M^k_r$,
Duplicator has a winning strategy in $\game kr{T-v}$.
By Lemma \ref{lem:play}.2, she can win also $\game kr{T}$, a contradiction.

Assume now that, besides all leaves, a  non-leaf $w$ is also colored.
Then there is a $w$-branch $B$ such that Spoiler wins $\game krB$,
which contradicts the condition that $T\in M^k_r$.
(If Duplicator could win $\game krB$ for all $w$-branches $B$,
the precoloring of $T$ would extend to a proper 3-coloring by Lemma \ref{lem:sep};
therefore, Duplicator could win $\game kr{T}$ as well).

 {\bf 2.} 
We first prove that all non-leaf vertices in $T$ 
have degree no more than 3.
We use induction on the number of vertices in $T$.
The base case, when $T$ has at most 4 vertices, is straightforward;
see Example \ref{ex:}. If $T$ has more vertices, consider the game
$\game krT$. 
Fix a winning strategy for Spoiler. We can assume that
Spoiler never asks Duplicator to recolor a colored vertex or a vertex
whose color is just erased.
Let $v$ be the first move by Spoiler.
By Part 1, $v$ is a non-leaf.
Duplicator can color $v$ in 3 ways.
In each case Spoiler wins in less than $r$ moves and, hence,
the tree $T$ with the vertex $v$ colored contains a subtree,
$\Tr$, $\Tb$, or $\Tg$ (analogous to $H'_1,H'_2,H'_3$ in the proof of
Lemma \ref{lem:diam}), belonging to $M^k_{r-1}$.
Like in the proof of Lemma \ref{lem:diam}, the definition of $M^k_r$
implies that $\Tr$, $\Tb$, and $\Tg$ share the vertex $v$ and cover the whole tree $T$.
By Part 1, each of $\Tr$, $\Tb$, and $\Tg$ is a $v$-branch of $T$.
By the induction assumption, all non-leaf
vertices in each of the branches have degree at most 3.
Therefore, all non-leaf vertices in $T$, including $v$, have degree at most~3.

It remains to argue that no non-leaf vertex of $T$ has degree 2.
Using induction on the number of vertices, we show that if $T$
has a vertex of degree 2, then any coloring of the leaves admits
a proper extension to the whole tree. If $T$ has at most 4 vertices,
this is true by straightforward inspection because $T$ is then a path on 3 or 4 vertices. 
Suppose that $T$ has more than 4 vertices.
Let $v$ be a vertex of degree 2. The graph $T-v$ consists of two parts,
$T_1$ and $T_2$. Let $v_i$ be the vertex of degree 3 or 1 in $T_i$ nearest to $v$.
Denote the path from $v_1$ to $v_2$ by $P$. Remove all intermediate
vertices of $P$ from $T$. This splits $T$ into two parts $T'_1$ and $T'_2$,
containing $v_1$ and $v_2$ respectively. If $v_i$ has degree 1 in $T$, then $T'_i$
has no other vertex.
If $v_i$ has degree 3 in $T$, then it has degree 2 in $T'_i$.
By the induction assumption, the coloring of each $T'_i$ extends to a proper
coloring of this part. Finally, these proper colorings of $T'_1$ and $T'_2$
extend to a proper coloring of $T$ along~$P$.

 {\bf 3.} 
Part 1 readily implies that $T$ has at most $k$ leaves, and we have to show
that it cannot have precisely $k$ leaves. Consider $\game krT$ and let
the vertex $v$ and the $v$-branches $\Tr$, $\Tb$, and $\Tg$ be as in the proof
of Part 2. That is, $\Tr$, $\Tb$, and $\Tg$ are minimal subtrees
such that Spoiler wins each of the games
$\game k{r-1}{\Tr}$, $\game k{r-1}{\Tb}$, and $\game k{r-1}{\Tg}$.
Recall that these subtrees belong to $M^k_{r-1}$ and cover~$T$.

Now, suppose that $T$ has $k$ leaves. To be able to color $v$, Spoiler 
must erase the color of one of the leaves. 
Since this leaf belongs to one of $\Tr$, $\Tb$, and $\Tg$,
we obtain a contradiction because, by Part 1, no member of $M^k_{r-1}$
can have an uncolored leaf.
\end{proof}

We are now ready to finish the proof of Theorem~\ref{thm:girth}.
Consider a tree $T$ in $M^k_R\setminus M^k_{R-1}$. Suppose that $T$ has $n$ vertices,
$l$ of which are leaves. 
Lemma \ref{lem:trees-M}.3 implies that $l<k$.
Under this condition, Lemma \ref{lem:sep} implies that $n>2^{R-1}+2$.
From Lemma \ref{lem:trees-M}.2 it follows that $l=n/2+1$. Therefore,
$$
l>2^{R-2}+2=2^{\lfloor\log_2(g(G)-2)\rfloor-3}+2>\frac1{16}\,g(G).
$$
By Lemma \ref{lem:trees-M}.1, $T$ has more than $g(G)/16$ colored vertices.
We conclude that $\kappa_R>g(G)/16$, thereby completing
the proof.

\section{Planar 3-colorability}

The 3-colorability problem has been actively studied also
for particular classes of graphs.
Estimation of $W(n)$ would be also meaningful for
such classes. Specifically, we define the dynamic width function over a class of graphs $\mathcal C$
by $W(n;\mathcal C)=\max\setdef{W(G)}{G\in{\mathcal C},\,|G|=n,\,\chi(G)>3}$.
Here we consider the dynamic width $\wplanar n$ for the class of planar graphs.

Though 3-COLORABILITY of planar graphs stays NP-complete \cite{GareyJS76},
it is solvable in time $2^{O(\sqrt n)}$; see \cite{Marx13}.
Under the Exponential Time Hypothesis, PLANAR 3-COLORABILITY
cannot be solved in time $2^{o(\sqrt n)}$ (Marx \cite{Marx13}).
Recall that the consistency checking algorithm solves the problem 
in time $2^{O(k(n)\log n)}$
for any constructive upper bound $W(n)=O(k(n))$.
Therefore, the Exponential Time Hypothesis implies
the lower bound $\wplanar n=\Omega(\sqrt n/\log n)$.

Our goal is to estimate $\wplanar n$ unconditionally.
Note that the method we used to prove Theorem \ref{thm:lower}
does not apply to the planar case directly. For this approach
we would need planar Cai-Fürer-Immerman graphs.
Such graphs do not exist because every planar graph is
definable in $k$-variable logic with counting quantifiers
for some absolute constant $k$ \cite{Grohe98}, and for 3-connected planar graphs
this is true even without counting \cite{Verbitsky07}.
We also cannot use Theorem \ref{thm:girth}
because, by Grötzsch's theorem~\cite{Groetzsch58},
every 4-colorable planar graph has girth~3.
Nevertheless, we are able to show tight bounds for $\wplanar n$
combining Theorem \ref{thm:lower} with the standard reduction
of the general 3-COLORABILITY to its planar version.

\begin{theorem}\label{thm:planar}
  $\wplanar n=\Theta(\sqrt n)$.
\end{theorem}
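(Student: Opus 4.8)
The plan is to establish the two matching bounds $\wplanar n = O(\sqrt n)$ and $\wplanar n = \Omega(\sqrt n)$ separately. The upper bound is the routine half. I would invoke the planar separator theorem of Lipton and Tarjan, which implies that every planar graph on $n$ vertices has treewidth $O(\sqrt n)$. Combining this with Freuder's bound \refeq{tw}, which gives $W(G) = W_{K_3}(G) \le \tw G + 1$ whenever $G \not\to K_3$, yields $W(G) = O(\sqrt n)$ for every non-3-colorable planar $G$ on $n$ vertices, and hence $\wplanar n = O(\sqrt n)$.

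For the lower bound, the idea is to transport the linear bound of Theorem \ref{thm:lower} through the standard planarization reduction. Via Lemma \ref{lem:indist}, Theorem \ref{thm:lower} furnishes, for arbitrarily large $N$, bounded-degree graphs $G_0$ and $H_0$ on $N$ vertices with $\chi(G_0) \le 3$, $\chi(H_0) > 3$, and $W^*(H_0, G_0) = \Omega(N)$. Fix a drawing of $H_0$ in the plane; since $H_0$ has bounded degree and hence $O(N)$ edges, the drawing can be taken with at most $O(N^2)$ pairwise edge crossings. Replacing each crossing by the standard planar crossover gadget---a fixed planar graph on a constant number of vertices whose proper 3-colorings force each of the two ``wires'' to transmit a disequality between its endpoints while leaving the two wires mutually unconstrained---yields a planar graph $H$ on $n = O(N^2)$ vertices with $\chi(H) > 3$. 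Performing the combinatorially identical replacement on $G_0$ produces a planar $G$ on $n$ vertices with $\chi(G) \le 3$. Since $N = \Omega(\sqrt n)$, it suffices to show $W^*(H, G) = \Omega(N)$; then $W(H) \ge W^*(H, G) = \Omega(\sqrt n)$, and as $H$ is a non-3-colorable planar graph on $n$ vertices this gives $\wplanar n = \Omega(\sqrt n)$.

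The heart of the argument, and the step I expect to be the main obstacle, is showing that the gadget replacement preserves logical indistinguishability up to a constant factor, i.e.\ $W^*(H, G) = \Omega\of{W^*(H_0, G_0)}$. I would prove this by a pebble-game simulation: assuming Duplicator wins the $k$-pebble game witnessing $W^*(H_0, G_0) > k$, I would show she wins the $\Omega(k)$-pebble game on $(H, G)$. When Spoiler places a pebble inside some gadget of $H$, Duplicator locates the original vertex or edge of $H_0$ to which that gadget corresponds, consults her strategy in the original game, and answers in the matching gadget of $G$; because each gadget has constant size, tracking a pebble within a gadget costs only a constant number of auxiliary pebbles, so a pebble budget linear in $k$ suffices.

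The delicate points are (i) verifying that the crossover and wire gadgets are rigid enough that a partial isomorphism of the underlying $H_0$-to-$G_0$ correspondence always lifts to a color-respecting partial map of the gadgeted graphs, so that Duplicator's original responses remain legal moves; and (ii) handling the local Cai--Fürer--Immerman twist consistently, so that the drawings of $G_0$ and $H_0$---and hence the crossings replaced and the resulting graphs $G$ and $H$---coincide outside a bounded region, which is precisely what lets the original strategy carry over. Once this simulation is in place, the two bounds combine to give $\wplanar n = \Theta(\sqrt n)$.
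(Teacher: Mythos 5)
Your upper bound is fine and is essentially the paper's: Freuder's bound \refeq{tw} together with the fact that planar graphs have treewidth $O(\sqrt n)$ (the paper cites Grigoriev's explicit bound $\tw G\le 5\sqrt n-1$ rather than Lipton--Tarjan, but the content is the same). The overall shape of your lower bound --- planarize the linear-lower-bound instance with crossover gadgets, accept a quadratic blowup in the number of vertices, and conclude a bound of $\Omega(\sqrt n)$ --- also matches the paper. The gap is in the step you yourself single out as the heart of the argument, namely the claim $W^*(H,G)=\Omega\of{W^*(H_0,G_0)}$ for the two planarized graphs. Duplicator's winning strategy in the $k$-pebble game on the Cai--F\"urer--Immerman pair $(H_0,G_0)$ maintains partial isomorphisms that are \emph{not} close to the identity: they swap vertices inside the gadgets along a path used to relocate the twist. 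Such a swap sends an edge $e$ of $H_0$ to a different edge $e'$ of $G_0$, and $e$ and $e'$ cross different sets of edges in the fixed geometric drawings; consequently the partial map does not extend to a partial isomorphism of the gadgeted graphs, and your simulation has no well-defined ``matching gadget of $G$'' in which to answer. Your delicate point (ii) --- that the drawings of $G_0$ and $H_0$ coincide outside a bounded region --- does not address this, because it is Duplicator's responses, not the drawings, that fail to respect the crossing structure. The paper explicitly warns that a full first-order indistinguishability argument is problematic for planar graphs (they are identified with constantly many variables with counting, and even without counting in the 3-connected case), which is a further sign that establishing a $W^*$ lower bound between two planar graphs is not a routine transfer.

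The paper avoids the issue entirely by transferring only the \emph{existential} width through the reduction. Its Lemma \ref{lem:reduction} shows that $W(G)>4k$ implies $W(G')>k$ for the single planarized graph $G'$ of the non-3-colorable instance: Duplicator simulates the $4k$-width 3-coloring game on $G$ inside the $k$-width 3-coloring game on $G'$, charging each pebbled gadget vertex to the four endpoints $u,v,w,z$ of the crossed edges (whence the factor $4$), and using only the coloring-extension properties of the crossover gadget to lift a proper partial 3-coloring of $G$ to one of $G'$ consistently across rounds. No second planarized graph and no lifting of partial isomorphisms is required. To repair your proof you should replace the $W^*(H,G)$ step by this one-graph existential simulation; the rest of your outline then goes through.
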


  The upper bound $\wplanar n\le5\sqrt n$ immediately follows
from the general bound \refeq{tw} because it is known \cite{Grigoriev11} that,  
if $G$ is planar, then $\tw G\le5\sqrt n-1$.
In the rest of this section we prove a lower bound of~$\Omega(\sqrt n)$.

\begin{figure}
  \centering
 \begin{tikzpicture}
\begin{scope}[scale=.7]
\node at (-3cm,2.5cm) {(i)};
\path[every node/.style=uvertex] 
      (0,0) node (a00) {}
      (1,0) node (a10)  {} edge (a00)
      (2,0) node (a20)  {} edge (a10)
      (-1,0) node (a-10)  {} edge (a00)
      (-2,0) node (a-20)  {} edge (a-10)
      (0,1) node (a01) {} edge (a-10) edge (a00) edge (a10)
      (1,1) node (a11) {} edge  (a10) edge  (a20)
      (-1,1) node (a-11) {} edge  (a01) edge  (a-20)
     (0,-1) node (a0-1) {} edge (a-10) edge (a00) edge (a10)
      (1,-1) node (a1-1) {} edge  (a0-1) edge  (a20)
      (-1,-1) node (a-1-1) {} edge  (a-10) edge  (a-20)
      (0,2)  node (a02) {} edge (a-11) edge (a01) edge (a11)
      (0,-2)  node (a0-2) {} edge (a-1-1) edge (a0-1) edge (a1-1)
      (-3,0) node (u) {} edge[dashed] (a-20)
      (3,0) node (v) {} edge[dashed] (a20)
      (0,3) node (z) {} edge[dashed] (a02)
      (0,-3) node (w) {} edge[dashed] (a0-2);
   \node[left] (uu) at (u) {$u\,\,$};
   \node[right] (vv) at (v) {$\,\,v$};
   \node[left] (zz) at (z) {$z\,\,$};
   \node[left] (ww) at (w) {$w\,\,$};
\end{scope}

\begin{scope}[xshift=4.5cm]
\node at (0cm,1.8cm) {(ii)};
\path[every node/.style=vertex] 
      (0,0) node (u) {}
      (1,1) node (u+)  {} edge (u)
      (1,-1) node (u-)  {} edge (u)
      (2,0) node (a)  {} edge (u+) edge (u-) 
      (3,1) node (a+)  {} edge (a)
      (3,-1) node (a-)  {} edge (a)
      (4,0) node (b)  {} edge (a+) edge (a-) 
      (5,1) node (b+)  {} edge (b)
      (5,-1) node (b-)  {} edge (b)
      (6,0) node (c)  {} edge (b+) edge (b-) 
      (7,0) node (v)  {} edge (c)
      (1,2) node[draw=none,fill=none] (u++) {} edge[dashed] (u+)
      (1,-2) node[draw=none,fill=none] (u--) {} edge[dashed] (u-)
      (3,2) node[draw=none,fill=none] (a++) {} edge[dashed] (a+)
      (3,-2) node[draw=none,fill=none] (a--) {} edge[dashed] (a-)
      (5,2) node[draw=none,fill=none] (b++) {} edge[dashed] (b+)
      (5,-2) node[draw=none,fill=none] (b--) {} edge[dashed] (b-);
   \node[left] (uu) at (u) {$u\,\,$};
   \node[right] (vv) at (v) {$\,\,v$};
\end{scope}

\begin{scope}[scale=.7,yshift=-6.5cm]
\node at (-3cm,2cm) {(iii)};
\path
      (0,0) node[avertex] (a00) {}
      (1,0) node[bvertex] (a10)  {} edge (a00)
      (2,0) node[avertex] (a20)  {} edge (a10)
      (-1,0) node[bvertex] (a-10)  {} edge (a00)
      (-2,0) node[avertex] (a-20)  {} edge (a-10)
      (0,1) node[cvertex] (a01) {} edge (a-10) edge (a00) edge (a10)
      (1,1) node[cvertex] (a11) {} edge  (a10) edge  (a20)
      (-1,1) node[bvertex] (a-11) {} edge  (a01) edge  (a-20)
     (0,-1) node[cvertex] (a0-1) {} edge (a-10) edge (a00) edge (a10)
      (1,-1) node[bvertex] (a1-1) {} edge  (a0-1) edge  (a20)
      (-1,-1) node[cvertex] (a-1-1) {} edge  (a-10) edge  (a-20)
      (0,2)  node[avertex] (a02) {} edge (a-11) edge (a01) edge (a11)
      (0,-2)  node[avertex] (a0-2) {} edge (a-1-1) edge (a0-1) edge (a1-1);
\end{scope}

\begin{scope}[scale=.7,yshift=-6.5cm,xshift=7cm]
\path[every node/.style=uvertex] 
      (0,0) node[avertex] (a00) {}
      (1,0) node[bvertex] (a10)  {} edge (a00)
      (2,0) node[cvertex] (a20)  {} edge (a10)
      (-1,0) node[bvertex] (a-10)  {} edge (a00)
      (-2,0) node[cvertex] (a-20)  {} edge (a-10)
      (0,1) node[cvertex] (a01) {} edge (a-10) edge (a00) edge (a10)
      (1,1) node[avertex] (a11) {} edge  (a10) edge  (a20)
      (-1,1) node[avertex] (a-11) {} edge  (a01) edge  (a-20)
     (0,-1) node[cvertex] (a0-1) {} edge (a-10) edge (a00) edge (a10)
      (1,-1) node[avertex] (a1-1) {} edge  (a0-1) edge  (a20)
      (-1,-1) node[avertex] (a-1-1) {} edge  (a-10) edge  (a-20)
      (0,2)  node[bvertex] (a02) {} edge (a-11) edge (a01) edge (a11)
      (0,-2)  node[bvertex] (a0-2) {} edge (a-1-1) edge (a0-1) edge (a1-1);  
\end{scope}
 \end{tikzpicture}
  \caption{(i) The crossover gadget $C$ replacing the crossing of edges $uv$ and $wz$.
(ii) Along an edge $uv$, the crossover gadgets share their corner vertices.
One of the end vertices $u$ or $v$ (but not both) is identified with 
the corner vertex of the nearest gadget.
(iii) Up to permutation of colors, the crossover graph has exactly two
proper 3-colorings. }
  \label{fig:gadget}
\end{figure}
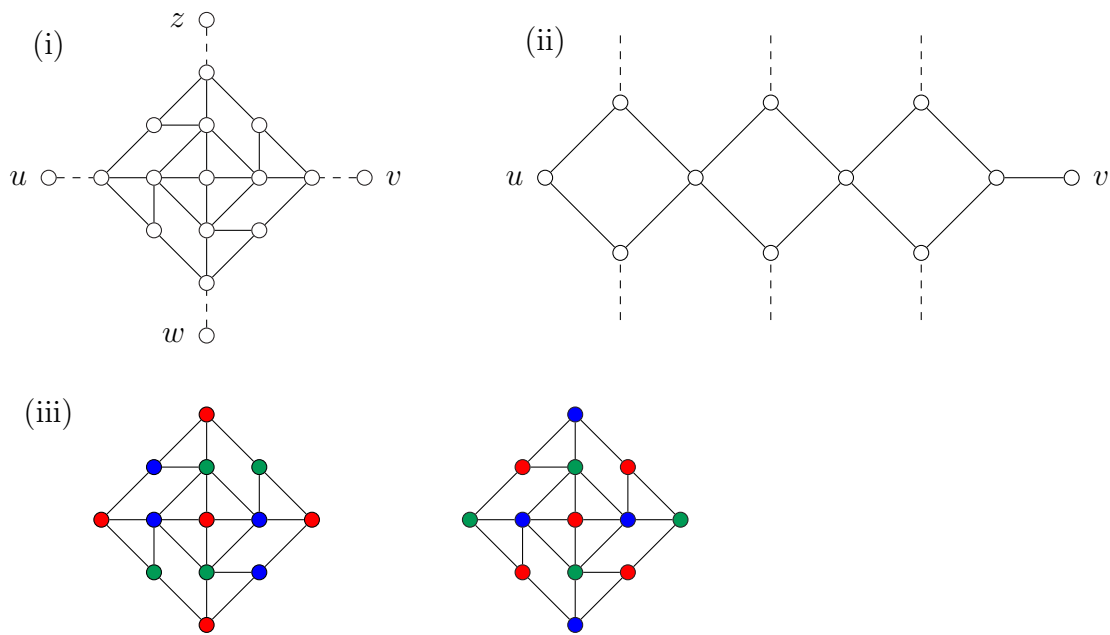
Recall the textbook reduction of 3-COLORABILITY to PLANAR 3-COLORA\-BI\-LITY \cite{GareyJS76}.
The reduction transforms an arbitrary graph $G$ into a planar graph $G'$ as follows.
First, a planar drawing $D$ of $G$ is made allowing edges crossings.
It is supposed that no more than two edges can cross at a point.
Then, each edge crossing in $D$ is replaced with the crossover gadget $C$ shown in Fig.~\ref{fig:gadget}.
The crossover gadget $C$ has two crucial properties:
\begin{enumerate}
\item[(a)] 
In any proper 3-coloring of $C$, the opposite corner vertices are equally colored;
\item[(b)]
any coloring of the four corner vertices where the opposite vertices
are equally colored uniquely extends to a proper 3-coloring of all of~$C$.
\end{enumerate}

Denote the set of 13 vertices in the gadget for the crossing edges $uv$ and $wz$ by $C_{uv,wz}$.
Thus, $C_{uv,wz}$ can contain one of $u$ and $v$ and/or one of $w$ and $z$, and
$V(G')=V(G)\cup\bigcup_{\{uv,wz\}}C_{uv,wz}$ where the union is over all edge crossings in $D$.
Property (a) implies the following fact.
\begin{enumerate}
\item[(A)] 
If $c'$ is a proper 3-coloring of $G'$ and $c$ is the restriction of $c'$ to $V(G)$,
then $c$ is a proper 3-coloring of~$G$.
\end{enumerate}
This property implies that if $\chi(G)>3$, then $\chi(G')>3$.

Given $X\subseteq V(G)$, define
\begin{equation}
  \label{eq:Xprime}
  X'=X\cup\bigcup C_{uv,wz}\text{ over adjacent $u,v\in X$ and all $wz$ crossing $uv$ in $D$}.
\end{equation}
The induced subgraph of $G$ spanned by $X$ will be denoted by $G[X]$.
We will exploit the following consequence of Property~(b).
\begin{enumerate}
\item[(B)] 
Let $X\subseteq V(G)$. Any proper 3-coloring $c$ of $G[X]$
admits an extension $c'$ to $X'$ that is a proper 3-coloring of $G'[X']$.
Moreover, all possible $c'$ are equal not only on $X$ but also on every $C_{uv,wz}$
such that $\{u,v,w,z\}\subseteq X$.
\end{enumerate}
Taking $X=V(G)$, we conclude that if $\chi(G)\le3$, then $\chi(G')\le3$.

\begin{lemma}\label{lem:reduction}
Let $\chi(G)>3$. If $W(G)>4k$ for an integer $k$, then $W(G')>k$.
\end{lemma}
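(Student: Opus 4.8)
The plan is to argue on Duplicator's side, through the game characterisation of Proposition~\ref{prop:game}. Since $\chi(G)>3$ gives $\chi(G')>3$ by Property~(A), it suffices to show that Duplicator wins the $r$-round $k$-width $3$-coloring game on $G'$ for every $r$, under the assumption that she wins the $4k$-width game on $G$ for every $r$ (which is exactly what $W(G)>4k$ means, via Proposition~\ref{prop:game} applied to $G$). So I would fix a winning strategy $\sigma$ for Duplicator in the $4k$-width game on $G$ and let her maintain a \emph{shadow} play of that game while she answers Spoiler on $G'$.

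The heart of the simulation is a map $\phi$ assigning to every vertex $p$ of $G'$ a set $\phi(p)\subseteq V(G)$ of at most four original vertices on which the colour of $p$ will depend: put $\phi(p)=\{p\}$ if $p\in V(G)$, and $\phi(p)=\{u,v,w,z\}$ if $p$ lies in a gadget $C_{uv,wz}$ (as an internal or a corner vertex). If $P$ denotes the set of currently coloured vertices of $G'$, Duplicator keeps coloured in the shadow game exactly $Q=\bigcup_{p\in P}\phi(p)$, so that $|Q|\le 4|P|\le 4k$, answering each fresh shadow request by $\sigma$, which keeps the shadow colouring $c$ a proper partial $3$-colouring of $G$. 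The point is that $\phi(p)\subseteq Q$ forces $p\in Q'$, where $Q'$ is obtained from $Q$ by \refeq{Xprime}: for a gadget vertex $p\in C_{uv,wz}$ we have $u,v\in Q$, so the whole gadget enters $Q'$. Duplicator then colours the pebbled vertices of $G'$ by the proper $3$-colouring $c'$ of $G'[Q']$ furnished by Property~(B); for a gadget vertex $p$ the value $c'(p)$ is unambiguous because $\{u,v,w,z\}\subseteq Q$ and Property~(B) makes all such extensions agree on $C_{uv,wz}$.

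Two things then remain, and the second is the crux. First, properness: any edge of $G'$ joining two pebbled vertices $x,y$ lies inside $G'[Q']$ (each of $x,y$ is in $Q'$ as noted), so $c'(x)\ne c'(y)$ because $c'$ properly colours $G'[Q']$; this uniformly covers edges inside a gadget, edges between a gadget corner and an endpoint in $V(G)$, and surviving original edges, and it ultimately rests on $c$ being proper on $G$. Second, consistency over time: Duplicator may never recolour a vertex that stays pebbled. Here I use that while $p$ remains in $P$ the entire set $\phi(p)$ remains in $Q$, so the shadow colours $c|_{\phi(p)}$ are never erased and never change; by the uniqueness clause of Property~(B) these colours determine $c'(p)$, so the colour of $p$ is frozen for as long as it is pebbled. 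This is the step where the factor $4$ is unavoidable, and I expect it to be the main obstacle: to pin down a single internal gadget colour through Property~(B) one must hold all four endpoints $u,v,w,z$ in the shadow game, even though that colour in fact depends only on the colour of one endpoint of $uv$ and one of $wz$.

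Finally I would settle the width bookkeeping. A Spoiler move on $G'$ is: optionally erase one pebble, then request one new vertex $p$; Duplicator mirrors it by first erasing the shadow pebbles that drop out of $Q$ and then requesting the at most four vertices of $\phi(p)$ one at a time. Before these insertions the set $P\setminus\{p\}$ has at most $k-1$ elements, so the shadow set has at most $4(k-1)$ vertices, and the four additions keep it within $4k$ throughout. Since $\sigma$ wins for every number of rounds, the simulation answers every round of the $G'$-game while preserving a proper partial $3$-colouring, so Duplicator wins $\game kr{G'}$ for all $r$. By Proposition~\ref{prop:game} this yields $W(G')>k$, as required.
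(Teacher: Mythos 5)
Your proof is correct and follows essentially the same route as the paper's: the same $4$-to-$1$ shadow simulation via the map sending a gadget vertex of $C_{uv,wz}$ to $\{u,v,w,z\}$, the same use of Property~(B) both to extend the shadow colouring to the gadgets and (via its uniqueness clause, applied on the common part of consecutive configurations) to keep the answers on $G'$ consistent over time, and the same $4(k-1)+4$ width bookkeeping.
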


Using Lemma \ref{lem:reduction}, we can now prove Theorem \ref{thm:planar}.
The construction in the proof of Theorem \ref{thm:lower} combined with 
Lemma \ref{lem:indist} gives us a non-3-colorable graph $G$ with
$n$ vertices, $e=O(n)$ edges, and $W(G)=\Omega(n)$. Let us
convert it into a non-3-colorable planar graph $G'$ as described above.
As the intermediate drawing $D$, we use a straight-line drawing of $G$
where edges are represented by segments of lines in general position;
hence, no three edges can share a crossing point.
Note that $D$ has less than $e^2=O(n^2)$ edge crossings.\footnote{%
This trivial bound for the number of edge crossings cannot be essentially
improved; it is known, for example, that most cubic graphs have
quadratic crossing number.}
Therefore, $G'$ has $N<n+13e^2=O(n^2)$ vertices.
By Lemma \ref{lem:reduction}, $W(G')=\Omega(n)=\Omega(\sqrt N)$.

Rigorously speaking, we have proved the bound $\wplanar N=\Omega(\sqrt N)$
for an infinite sequence of $N$. In order to get a desired graph for an
intermediate value of $N$, we construct $G'$ for the nearest number $N'<N$ 
in the sequence and pad it out with $N-N'$ isolated vertices.
To complete the proof, it remains to prove the lemma.

\begin{proofof}{Lemma \ref{lem:reduction}}
By Proposition \ref{prop:game}, it suffices to show that
Duplicator has a winning strategy for the $k$-width 3-coloring
game on $G'$. We will show that she can translate her
winning strategy for the $4k$-width 3-coloring
game on $G$ to the game on $G'$.
More precisely, the assumption $W(G)>4k$ implies that
Duplicator has a winning strategy in $\game{4k}rG$ for every $r$.
We will show that Duplicator's winning strategy for $\game{4k}{4r}G$
can be transformed to a winning strategy for $\game kr{G'}$.

Given $Y\subseteq V(G')$, we define the set $X=X(Y)\subseteq V(G)$ as follows:
$X$ contains all $Y\cap V(G)$, and each $y\in Y\setminus V(G)$ such that
$y\in C_{uv,wz}$ contributes the vertices $u$, $v$, $w$, and $z$ in $X$.
Note that $|X|\le4|Y|$ and $Y\subseteq X'$, where $X'$ is defined by~\refeq{Xprime}.
Property (B) of the reduction implies the following.

\begin{claim}\label{cl:}
  Let $Y\subseteq V(G')$ and $X=X(Y)$. Any proper 3-coloring $c$ of $G[X]$
extends to a proper 3-coloring $c'$ of $G'[X']$, and all such extensions
are equal on~$Y$.
\end{claim}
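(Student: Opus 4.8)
The plan is to derive the claim almost directly from Property (B), the only real work being to verify that the definition of $X=X(Y)$ guarantees that every vertex of $Y$ lies in a region where Property (B) already forces all extensions to agree. In outline: existence of an extension is the raw content of (B), and uniqueness on $Y$ comes from partitioning $Y$ into its part inside $V(G)$ and its part inside the gadgets, then applying the two clauses of (B) respectively.

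First I would dispatch existence. Since $X=X(Y)\subseteq V(G)$, Property (B) applied to this particular $X$ says that any proper 3-coloring $c$ of $G[X]$ admits an extension $c'$ to $X'$ that is a proper 3-coloring of $G'[X']$. Because $Y\subseteq X'$ (as was noted right after $X(Y)$ was defined), the restriction of any such $c'$ to $Y$ is well defined, so the statement makes sense and the first half is settled.

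For uniqueness on $Y$, I would split $Y$ into $Y\cap V(G)$ and $Y\setminus V(G)$ and invoke the matching half of (B) for each. Every vertex of $Y\cap V(G)$ lies in $X$ by the definition of $X(Y)$, and (B) asserts that all extensions $c'$ coincide on $X$; hence they coincide at each such vertex. For $y\in Y\setminus V(G)$, the definition of $X(Y)$ tells us that $y$ belongs to some gadget $C_{uv,wz}$ and that its four corners $u,v,w,z$ were all placed into $X$, so $\{u,v,w,z\}\subseteq X$. The second clause of (B) — that all extensions agree on every $C_{uv,wz}$ with $\{u,v,w,z\}\subseteq X$ — then forces agreement at $y$. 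As these two cases exhaust $Y$, all extensions are equal on $Y$.

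I do not anticipate a genuine obstacle: the claim is a repackaging of (B) tailored to the set $X(Y)$. The single point requiring care is the bookkeeping built into $X(Y)$, namely that it was defined so that whenever a gadget vertex enters $Y$ \emph{all four} of its corners enter $X$; this is precisely what makes the gadget-agreement clause of (B) applicable to every $y\in Y\setminus V(G)$, and it is also what underlies the bound $|X|\le 4|Y|$ used elsewhere in the argument.
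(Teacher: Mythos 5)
Your proof is correct and follows exactly the route the paper intends: the paper simply asserts that the claim is implied by Property (B), and your elaboration (existence from the first clause of (B), uniqueness on $Y$ by splitting into $Y\cap V(G)\subseteq X$ and gadget vertices whose four corners lie in $X$ by the definition of $X(Y)$) is the intended argument spelled out.
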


Let $Y_i$ denote the set of vertices colored after the $i$-th round of the game
$\game kr{G'}$. Duplicator's strategy we are going to describe will
have the following properties. 
\begin{enumerate}
\item[(P1)] 
$X_i=X(Y_i)$ appears as the set
of colored vertices after the $r_i$-th round of 
$\game{4k}{4r}G$, for some $r_i$ such that $i\le r_i\le4i$,
if in this game Duplicator uses a winning strategy and Spoiler uses the
strategy specially designed (simulated) depending on the strategy he follows in $\game kr{G'}$.
\item[(P2)]  
If $c_i$ is the coloring of $X_i$ in $\game{4k}{4r}G$
(which is a proper 3-coloring of $G[X_i]$ because Duplicator
follows a winning strategy) and $c'_i$ is its extension to
a proper 3-coloring of $G'[(X_i)']$ (existing by Claim \ref{cl:}), 
then the coloring $d_i$ of $Y_i$ in $\game kr{G'}$ is the restriction of 
$c'_i$ to~$Y_i$.
\end{enumerate}
A strategy ensuring these properties is obviously winning
because $d_i$ is a proper 3-coloring of $G'[Y_i]$.

We now have to define the simulated strategy for Spoiler in $\game{4k}{4r}G$.
It is essentially determined by the condition that the configuration
$X_i=X(Y_i)$ has to appear in $G$ once the configuration $Y_i$ appears in $G'$.
The coloring $c_i$ of $X_i$ is determined by Duplicator's strategy in $\game{4k}{4r}G$.
The coloring $d_i$ of $Y_i$ satisfying (P2) is completely determined by $c_i$ 
because, by Claim \ref{cl:}, all possible extensions $c'_i$ of $c_i$ coincide on~$Y_i$.
We will need to carefully check that $d_{i+1}$ agrees with $d_i$ on $Y_{i+1}\cap Y_i$;
only under this condition $d_i$ can be produced by some Duplicator's strategy in $\game kr{G'}$.

Consider the first round of $\game kr{G'}$.
Suppose that Spoiler asks Duplicator to color a vertex $y$;
thus, $Y_1=\{y\}$. If $y\in V(G)$, then
$X_1=\{y\}$. Otherwise $y$ belongs to some $C_{uv,wz}$,
and then $X_1=\{u,v,w,z\}$. Duplicator simulates $\game{4k}{4r}G$,
assuming that Spoiler claims the vertices of $X_1$ in the first rounds.
In accordance with Property (P2),
let $c_1$ be a proper 3-coloring of $G[X_1]$ according to Duplicator's
strategy for $\game{4k}{4r}G$ and $c'_1$ be an extension
of $c_1$ to a proper 3-coloring of $G'[(X_1)']$ according to Claim \ref{cl:}.
In the game $\game kr{G'}$, Duplicator assigns $y$ the color $d_1(y)=c'_1(y)$.

Assume now that Properties (P1) and (P2) are obeyed up to the 
$i$-th round and consider the $(i+1)$-th round of $\game kr{G'}$.

\Case 1{$|Y_i|<k$, $Y_{i+1}=Y_i\cup\{y\}$}
(that is, in the $(i+1)$-th round Spoiler asks to color a new vertex $y$).
Note that $X_i\subseteq X_{i+1}$, $|X_i|\le4(k-1)$, and $|X_{i+1}\setminus X_i|\le4$.
Duplicator simulates next rounds of $\game{4k}{4r}G$,
assuming that Spoiler claims the vertices in $X_{i+1}\setminus X_i$.
Let $c_i$ and $c_{i+1}$ be the colorings of $X_i$ and $X_{i+1}$
in $\game{4k}{4r}G$, $c_{i+1}$ being an extension of $c_i$.
By assumption, the coloring $d_i$ of $Y_i$ in $\game kr{G'}$
is obtainable by extending $c_i$ to a proper coloring $c'_i$
of $G'[(X_i)']$ and by restricting $c'_i$ to $Y_i$.
Similarly, extend $c_{i+1}$ to a proper coloring $c'_{i+1}$
of $G'[(X_{i+1})']$ and denote the restriction of $c'_{i+1}$
to $Y_{i+1}$ by $d_{i+1}$. Since $c_{i+1}$ extends $c_i$,
the restriction of $c'_{i+1}$ to $(X_i)'$ is an extension of $c_i$
to a proper coloring of $G'[(X_i)']$. By Claim \ref{cl:},
$c'_{i+1}$ and $c'_i$ coincide on $Y_i$. It follows that 
$d_{i+1}$ is an extension of $d_i$. Duplicator assigns $y$
the color $d_{i+1}(y)$, ensuring Properties (P1) and (P2)
also for the $(i+1)$-th round of $\game kr{G'}$.

\Case 2{$|Y_i|=k$, $Y_{i+1}=(Y_i\setminus\{y_1\})\cup\{y_2\}$}
(that is, Spoiler erases the color of $y_1$ and asks to color $y_2$ instead).
Let $Y_0=Y_i\setminus\{y_1\}$ and $X_0=X(Y_0)$. Note that $|X_0|\le4(k-1)$,
$|X_i\setminus X_0|\le4$, and $|X_{i+1}\setminus X_0|\le4$.
Duplicator simulates next rounds of $\game{4k}{4r}G$,
assuming that Spoiler erases the colors of the vertices in $X_i\setminus X_0$ 
and asks to color the vertices in $X_{i+1}\setminus X_0$.
Let $c_i$ and $c_{i+1}$ be the colorings of $X_i$ and $X_{i+1}$.
Note that they agree on $X_0$. Denote the (common) coloring of $X_0$
by $c_0$.
By assumption, the coloring $d_i$ of $Y_i$ in $\game kr{G'}$
is obtainable by extending $c_i$ to a proper coloring $c'_i$
of $G'[(X_i)']$ and by restricting $c'_i$ to $Y_i$.
Similarly, extend $c_{i+1}$ to a proper coloring $c'_{i+1}$
of $G'[(X_{i+1})']$ and denote the restriction of $c'_{i+1}$
to $Y_{i+1}$ by $d_{i+1}$. Since both $c_i$ and $c_{i+1}$ extend $c_0$,
the restrictions of $c'_i$ and $c'_{i+1}$ to $(X_0)'$ are extensions of $c_0$
to a proper coloring of $G'[(X_0)']$. By Claim \ref{cl:},
$c'_{i+1}$ and $c'_i$ coincide on $Y_0$. It follows that 
$d_{i+1}$ and $d_i$ coincide on $Y_0$ as well. Duplicator assigns the vertex $y_2$
the color $d_{i+1}(y_2)$, ensuring Properties (P1) and (P2)
for the $(i+1)$-th round of $\game kr{G'}$ also in this case.
\end{proofof}




\end{document}